\definecolor{pku-red}{RGB}{139,0,18} 
\crefname{algorithm}{Algorithm}{Algorithms}
\newcommand{\bbE}{\mathbb{E}}
\newcommand{\bbR}{\mathbb{R}}
\newcommand{\calB}{\mathcal{B}}
\newcommand{\calF}{\mathcal{F}}
\newcommand{\calG}{\mathcal{G}}
\newcommand{\calL}{\mathcal{L}}
\newcommand{\calM}{\mathcal{M}}
\newcommand{\calN}{\mathcal{N}}
\newcommand{\calP}{\mathcal{P}}
\newcommand{\calU}{\mathcal{U}}
\newcommand{\bmb}{{\bm{b}}}
\newcommand{\bmg}{{\bm{g}}}
\newcommand{\bmp}{{\bm{p}}}
\newcommand{\bmx}{{\bm{x}}}
\newcommand{\bmll}{\bm{\lambda}}
\newcommand{\iinn}{{i\in[n]}}
\newcommand{\jinm}{{j\in[m]}}
\newcommand{\iinM}{{i\in[M]}}
\newcommand{\ones}{\mathbf{1}} 
\newcommand{\zeros}{\mathbf{0}} 
\DeclareMathOperator*{\argmax}{arg\,max}
\newcommand{\rank}{\mathrm{rank}} 
\newcommand{\widesim}{{\scalebox{1.8}[1]{$\sim$}}} 
\newcommand{\iidd}{\overset{\iid}{\widesim}} 
\newcommand{\ie}{\emph{i.e.}}    
\newcommand{\eg}{\emph{e.g.}}    
\newcommand{\st}{\mathrm{s.t.}}   
\newcommand{\iid}{\mathrm{i.i.d.}}  
\newcommand{\OBJ}{\mathrm{OBJ}}
\newcommand{\OPT}{\mathrm{OPT}}
\newcommand{\LFW}{\mathrm{LFW}}
\newcommand{\LNW}{\mathrm{LNW}}
\newcommand{\GAP}{\mathrm{NG}}
\newcommand{\VoA}{\mathrm{VoA}}
\newcommand{\VoP}{\mathrm{VoP}}
\newcommand{\total}{\mathrm{total}}
\newcommand{\WSW}{\mathrm{WSW}}
\newcommand{\naive}{{na\"ive}}
\newcommand{\Naive}{{Na\"ive}}
\newcommand{\bmY}{\bm{Y}}
\newcommand{\bmlam}{\bm{\lambda}}
\begin{document}
\title{Large-Scale Contextual Market Equilibrium Computation through Deep Learning}
\titlerunning{Contextual Market Equilibrium Computation}



\author{
Yunxuan Ma\inst{1,2} \and
Yide Bian\inst{3} \and
Hao Xu\inst{4} \and
Weitao Yang\inst{4} \and
Jingshu Zhao\inst{4} \and
Zhijian Duan\inst{2} \and
Feng Wang\inst{4} \and
Xiaotie Deng\inst{2,5\dagger}
}
\authorrunning{Ma et al.}
%
\institute{
PKU-WUHAN Institute for Artificial Intelligence, Wuhan, China \and
CFCS, School of Computer Science, Peking University \\
\email{\{yunxuanma,zjduan,xiaotie\}@pku.edu.cn} \and
Yuanpei College, Peking University \\
\email{bian1d@stu.pku.edu.cn} \and
School of Computer Science, Wuhan University \\
\email{\{xuhao2002,yangweitao,candyzhao,fengwang\}@whu.edu.cn} \and
CMAR, Institute for AI, Peking University \\
$^{\dagger}$Corresponding author 
}

%
\maketitle              
\begin{abstract}

Market equilibrium is one of the most fundamental solution concepts in economics and social optimization analysis.
Existing works on market equilibrium computation primarily focus on settings with relatively few buyers.
Motivated by this, our paper investigates the computation of market equilibrium in scenarios with a large-scale buyer population, where buyers and goods are represented by their contexts.
Building on this realistic and generalized contextual market model, we introduce MarketFCNet, a deep learning-based method for approximating market equilibrium.
We start by parameterizing the allocation of each good to each buyer using a neural network, which depends solely on the context of the buyer and the good.
Next, we propose an efficient method to unbiasedly estimate the loss function of the training algorithm, enabling us to optimize the network parameters through gradient.
To evaluate the approximated solution, we propose a metric called Nash Gap, which quantifies the deviation of the given allocation and price pair from the market equilibrium.
Experimental results indicate that MarketFCNet delivers competitive performance and significantly lower running times compared to existing methods as the market scale expands, demonstrating the potential of deep learning-based methods to accelerate the approximation of large-scale contextual market equilibrium. 

\end{abstract}

\keywords{market equilibrium \and contextual market \and equilibrium measure \and neural networks \and machine learning}

\addtocounter{footnote}{-5}

\newpage
\section{Introduction}
\label{sec:intro}

Market equilibrium is a solution concept in microeconomics theory, which studies how \emph{individuals} amongst groups will exchange their \emph{goods} to get each one better off \citep{MWG:mas1995microeconomic}. 
The importance of market equilibrium is evidenced by the 1972 Nobel Prize awarded to John R. Hicks and Kenneth J. Arrow ``for their pioneering contributions to general economic equilibrium theory and welfare theory'' \citep{Nobel1972}.
Market equilibrium has wide application in fair allocation \citep{gao2021online}, as a few examples, fairly assigning course seats to students \citep{fair:course:budish2011combinatorial} or dividing estates, rent, fares, and others \citep{fair:estate:goldman2015spliddit}. Besides, market equilibrium is also considered for ad auctions with budget constraints where money has real value \citep{auction:PACE:conitzer2022pacing,auction:PACE:conitzer2022multiplicative}.


Existing works often use traditional optimization methods or online learning techniques to solve market equilibrium, which can tackle one market with around $400$ buyers and goods in experiments \citep{gao2020first, nan2023fast}. 
However, in realistic scenarios, there might be millions of buyers in one market (\eg\ job market, online shopping market). 
In these scenarios, the description complexity for the market is $\Omega(nm)$ and existing algorithms need $\Omega(nm)$ computational costs to do one iteration step on the market, if there are $n$ buyers and $m$ goods, which is unacceptable when $n$ is extremely large and potentially infinite. 

Contextual models come to the rescue.
The success of contextual auctions \citep{duan2022context,contextual-auctions:balseiro2023contextual} demonstrates the power of contextual models, in which each bidder and item can be described with low-dimensional representations, which consist of the information about bidders and items and further determines the value (or its distribution) of items to bidders.
In this way, auctions, as well as other economic problems, can be described more memory-efficiently, making it possible to accelerate the computation of these problems.
Inspired by the models of contextual auctions, we propose the concept of contextual markets in a similar way.
We argue that contextual markets can be useful to model large-scale markets aforementioned, since prior works have assumed the real market to be within some low-dimension space, and the values of goods to buyers are often not hard to speculate given the knowledge of goods and buyers \citep{context:kroer2019computing, context:kroer2019scalable}. 
Besides, contextual models never lose expressive power compared with raw models in the worst case \cite{contextual-expressive:bengio2009curriculum}.

This paper initiates the study of \emph{deep learning} for \emph{contextual} market equilibrium computation with a large (and potential infinite) number of buyers.
Following the framework of differentiable economics \citep{lottery-AMA:curry2022differentiable,differentiable-economy:dutting2023optimal, contract:wang2023deep}, we propose a deep-learning-based approach, \emph{MarketFCNet}, that takes the representations of one buyer and one good as input, and outputs the allocation of the good to the buyer.
The training on MarketFCNet targets at an unbiased estimator of the objective function in EG-convex program, which can be achieved with constant buyer samples.
Through learning an allocation function, we decrease the network iteration costs to $O(m)$ compared with $O(nm)$ in traditional buyer-wise methods, greatly accelerating the learning of market equilibrium.
In addition, the allocation of any item to any buyer, as well as the price of any item, can be accessed through one network call and thus have $O(1)$ query complexity.

In this way, we optimize the allocation function on ``buyer space'' implicitly, rather than optimizing the allocation to each buyer separately.
Therefore, the performance of MarketFCNet becomes independent of the number of buyers $n$. As a consequence, MarketFCNet shows its efficiency when $n$ becomes large.




The effectiveness of MarketFCNet is demonstrated by our experimental results.
As the market scale expands, MarketFCNet delivers competitive performance and significantly lower running times compared to existing methods in different experimental settings, demonstrating the potential of deep learning-based methods to accelerate the approximation of large-scale contextual market equilibrium.

To conclude, the contributions of this paper consist of three parts,
\begin{itemize}
    \item We propose a method, MarketFCNet, to approximate the contextual market equilibrium in which the number of buyers is large.
    \item We propose a measure, Nash Gap, to quantify the deviation of the given allocation and price pair from the market equilibrium.
    \item We conduct extensive experiments, demonstrating promising performance on the approximation measure and running time compared with existing methods.
\end{itemize}


\section{Related Works}
\label{sec:related-works}

\paragraph{Market Equilibriums}
The history of market equilibrium arises from microeconomics theory, where the concept of competitive equilibrium \citep[\S 10]{MWG:mas1995microeconomic} was proposed, and the existence of market equilibrium is guaranteed in a general setting \citep{equilibrium-exsistence:arrow1954existence,equilibrium-exsistence:walras2013elements}.
\citet{eisenberg1959consensus} first considered the linear market case, and proved that the solution of the EG-convex program constitutes a market equilibrium, which lays the polynomial-time algorithmic foundations for market equilibrium computation.
\citet{eisenberg1961aggregation} later showed that the EG program also works for a class of CCNH utility functions.
Shmyrev program later is also proposed to solve market equilibrium with linear utility with a perspective shift from allocation to price \citep{Shmyrev-program:shmyrev2009algorithm}, while \citet{Shmyrev-program:cole2017convex} later found that Shmyrev program is the dual problem of EG program with a change of variables.
There is also a branch of literature that consider computational perspective in more general settings such as indivisible goods \citep{equilibrium:complex:papadimitriou2001algorithms,equilibrium:complex:deng2002complexity,equilibrium:complex:deng2003complexity} and piece-wise linear utility \citep{equilibrium:piecewise-linear:vazirani2011market,equilibrium:piecewise-linear:garg2017settling,equilibrium:piecewise-linear:garg2022approximating}.

\paragraph{Algorithms of Solving Market Equilibriums}
There are abundant works that present algorithms to solve the market equilibrium and show the convergence results theoretically \citep{fisher:algo:cole2008fast}.
\citet{gao2020first} discusses the convergence rates of first-order algorithms for EG convex program under linear, quasi-linear, and Leontief utilities. 
\citet{nan2023fast} later designs stochastic optimization algorithms for the EG convex program and Shmyrev program with convergence guarantee and show some economic insight.
\citet{fisher:dist:algo:jalota2023fisher} proposes an ADMM algorithm for CCNH utilities and shows linear convergence results.
Besides, researchers are more engaged in designing dynamics that possess more economic insight.
For example, PACE dynamic \citep{gao2021online, PACE:liao2022nonstationary,PACE:yang2023greedy} and proportional response dynamic \citep{fisher:PR-dynamic:wu2007proportional,fisher:PR-dynamic:zhang2011proportional,fisher:PR-dynamic:cheung2018dynamics}, though the original idea of PACE arise from auction design \citep{auction:PACE:conitzer2022multiplicative, auction:PACE:conitzer2022pacing}.

\paragraph{Differential Economics}
With the fast growth of machine learning and neural networks, many existing works aim at resolving economic problems by deep learning approach, which falls into the differentiate economy framework \citep{differentiable-economy:dutting2023optimal}.
A mainstream is to approximate the optimal auction with differentiable models by neural networks \citep{auction-design:dutting2019optimal,auction-design:feng2018deep,auction-design:golowich2018deep,auction-design:rahme2021permutation}.
The problem of Nash equilibrium computation in normal form games \citep{nash-approximator:duan2023nash,nash-approximator:marris2022turbocharging,duan2023equivariant} and optimal contract design \citep{contract:wang2023deep} through deep learning also attracts researchers' attentions.
Among these methodologies, transformer architecture \citep{nash-approximator:marris2022turbocharging,duan2022context,auction-design:li2023learning} is widely used in solving economic problems for its permutation equivariance property.

\paragraph{Deep learning for market equilibrium}
To the best of our knowledge, no existing works try to approximate market equilibrium through deep learning. 
Besides, although some literature focuses on low-rank markets and representative markets \citep{context:kroer2019computing,context:kroer2019scalable},
our works firstly propose the concept of contextual market.
We believe that our approach will pioneer a promising direction for large-scale contextual market equilibrium computation.



\section{Contextual Market Modelling} 
\label{sec:preliminary}


In this section, we focus on the model of contextual market equilibrium in which goods are assumed to be divisible. 
Let the market consist of $n$ buyers, denoted as $1,...,n$, and $m$ goods, denoted as $1,...,m$. We denote $[k]$ as the abbreviation of the set $\{1,2,\dots,k\}$.
Each buyer $\iinn$ has a representation $b_i$, and each good $\jinm$ has a representation $g_j$. We assume that $b_i$ belongs to the buyer representation space $\calB$, and $g_j$ belongs to the good representation space $\calG$.
For a buyer with representation $b\in\calB$, she has a budget $B(b) > 0$.
Denote $Y(g)>0$ as the supply of good with representation $g$. Although many existing works \citep{gao2020first} assume that each good $j$ has \textit{unit} supply (i.e. $Y(g) \equiv 1$ for all $g\in\calG$) without loss of generality, their models can be easily generalized to our settings.

An \textit{allocation} is a matrix $\bmx = (x_{ij})_{\iinn,\jinm} \in \bbR^{n\times m}_+$, where $x_{ij}$ is the amount of good $j$ allocated to buyer $i$. We denote $\bmx_i = (x_{i1}, \dots, x_{im})$ as the vector of the bundle of goods that is allocated to buyer $i$.
The buyers' utility function is denoted as $u: \calB \times \bbR^m_+ \to \bbR_+$, here $u(b_i;\bmx_i)$ denotes the utility of buyer $i$ with representation $b_i$ when she chooses to buy $\bmx_i$.
We denote $u_i(\bmx_i)$ as an equivalent form of $u(b_i; \bmx_i)$ and often refer to them as the same thing. Similarly, $B(b_i), Y(g_j)$ and $B_i, Y_j$ are often referred to as the same thing, respectively.

Let $\bmp=(p_1,\dots,p_m) \in \bbR^m_+$ be the prices of the goods, the \emph{demand set} of buyer with representation $b_i$ is defined as the set of utility-maximizing allocations within budget constraint.
\begin{equation}
    D(b_i;\bmp) \coloneqq \argmax_{\bmx_i} \left\{u(b_i;\bmx_i)\mid  \bmx_i\in \bbR^m_+,\, \langle \bmp, \bmx_i\rangle \leq B(b_i) \right\}.
\end{equation}
A \emph{contextual market} is a 4-tuple: $\calM = \langle n, m, (b_i)_\iinn, (g_j)_\jinm \rangle$, where buyer utility $u(b_i;\bmx_i)$ is known given the information of the market.
We also assume budget function $B: \calB \to \bbR_+$ represents the budget of buyers and capacity function $Y: \calG \to \bbR_+$ represents the supply of goods. 
All of $u$, $B$, and $Y$ are assumed to be public knowledge and excluded from a market representation. 
This assumption mainly comes from two aspects: (1) these functions can be learned from historical data and (2) budgets and supplies can be either encoded in $b$ and $g$ in some way.

The \emph{market equilibrium} is represented as a pair $(\bmx, \bmp)$, $\bmx\in\bbR_+^{n\times m},\ \bmp\in \bbR_+^m$, which satisfies the following conditions.
\begin{itemize}
	\item \textit{Buyer optimality}: $\bmx_i \in D(b_i, \bmp)$ for all $\iinn$, 
	\item \textit{Market clearance}: $\sum_{i=1}^n x_{ij} \leq Y(g_j)$ for all $\jinm$, and equality must hold if $p_j > 0$. 
\end{itemize}

We say that $u_i$ is \textit{homogeneous} (with degree $1$) if it satisfies $u_i(\alpha \bmx_i) = \alpha u_i(\bmx_i)$ for any $\bmx_i \geq 0$ and $\alpha>0$ \citep[\S 6.2]{nisan2007algorithmic}.
Following existing works, we assume that $u_i$s are CCNH utilities, where CCNH represents concave, continuous, non-negative, and homogeneous functions\citep{gao2020first}. 
For CCNH utilities, a market equilibrium can be computed using the following \emph{Eisenberg-Gale convex program} (EG): 
\begin{equation}
\label{eq:eisenberg-gale-primal}
\max \sum_{i=1}^n B_i \log u_i(\bmx_i)\quad {\rm s.t.} \ \sum_{i=1}^n x_{ij} \leq Y_j, \ \bmx\geq 0.
\tag{EG}
\end{equation}

Theorem \ref{thm:GAOKROER} shows that the market equilibrium can be represented as the optimal solution of \eqref{eq:eisenberg-gale-primal}.

\begin{theorem}[\citet{gao2020first}]
\label{thm:eg-gives-me-for-certain-ui}
Let $u_i$ be concave, continuous, non-negative, and homogeneous (CCNH). Assume $u_i(\ones) >0$ for all $i$. Then, (i) \eqref{eq:eisenberg-gale-primal} has an optimal solution and (ii) any optimal solution $\bmx$ to \eqref{eq:eisenberg-gale-primal} together with its optimal Lagrangian multipliers $\bmp^*\in \bbR_+^m$ constitute a market equilibrium, up to arbitrary assignment of zero-price items. Furthermore, $\langle \bmp^*, \bmx^*_i \rangle = B_i$ for all $i$.
\label{thm:GAOKROER}
\end{theorem}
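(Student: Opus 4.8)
The plan is to read the market-equilibrium conditions off the optimality conditions of \eqref{eq:eisenberg-gale-primal}, viewed as the maximization of a concave (extended-real-valued) objective over a compact polyhedron; the two non-routine ingredients are controlling the $-\infty$ boundary of $\log$ and exploiting the homogeneity of the $u_i$.

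\textbf{Part (i).} First I would record that the feasible set $F=\{\bmx\ge 0:\sum_i x_{ij}\le Y_j\ \forall j\}$ is nonempty, convex, and compact, that $\bmx_i\mapsto B_i\log u_i(\bmx_i)$ is concave (the concave increasing $\log$ composed with the concave $u_i$), and that the objective is bounded above on $F$ since each $u_i$ is continuous, hence bounded, on the compact $F$. To see the supremum is attained, exhibit a feasible allocation with finite objective: with $t=\min_j Y_j/n>0$, give every buyer the bundle $t\ones$; homogeneity gives $u_i(t\ones)=t\,u_i(\ones)>0$, so the objective is finite there. Then take a maximizing sequence and pass to a subsequence converging to some $\bar{\bmx}\in F$; if $u_i(\bar{\bmx}_i)=0$ for some $i$, then, that term being bounded above while tending to $-\infty$, the objective along the subsequence tends to $-\infty$, contradicting maximality. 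Hence $u_i(\bar{\bmx}_i)>0$ for all $i$, the objective is continuous at $\bar{\bmx}$, and $\bar{\bmx}$ is optimal.

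\textbf{Part (ii).} Since all constraints are affine and there is a strictly feasible point with all utilities positive, standard convex duality yields strong duality, so an optimal $\bmx^*$ admits Lagrange multipliers $\bmp^*\in\bbR_+^m$ for the supply constraints with $(\bmx^*,\bmp^*)$ a saddle point of $L(\bmx,\bmp)=\sum_i B_i\log u_i(\bmx_i)-\sum_j p_j(\sum_i x_{ij}-Y_j)$ over $\bmx\ge 0,\ \bmp\ge 0$. Minimality in $\bmp$ gives primal feasibility together with complementary slackness $p_j^*(\sum_i x_{ij}^*-Y_j)=0$, i.e.\ market clearance, up to arbitrary assignment of zero-price items that may be left over. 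Maximality in $\bmx\ge 0$ decouples across buyers, since $L(\bmx,\bmp^*)=\sum_i\big(B_i\log u_i(\bmx_i)-\langle\bmp^*,\bmx_i\rangle\big)+\sum_j p_j^* Y_j$, so for each $i$ we get $\bmx_i^*\in\argmax_{\bmx_i\ge 0}\{B_i\log u_i(\bmx_i)-\langle\bmp^*,\bmx_i\rangle\}$, and by the argument in (i), $u_i(\bmx_i^*)>0$. Writing $c=\langle\bmp^*,\bmx_i^*\rangle$ and replacing $\bmx_i^*$ by $\alpha\bmx_i^*$, homogeneity turns the per-buyer objective into $B_i\log u_i(\bmx_i^*)+B_i\log\alpha-\alpha c$; optimality of $\alpha=1$ rules out $c\le 0$ (which would make it unbounded) and then first-order optimality of the strictly concave map $\alpha\mapsto B_i\log\alpha-\alpha c$ forces $c=B_i$, giving $\langle\bmp^*,\bmx_i^*\rangle=B_i$. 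Finally, if some $\bmy\ge 0$ with $\langle\bmp^*,\bmy\rangle\le B_i$ had $u_i(\bmy)>u_i(\bmx_i^*)$, then $B_i\log u_i(\bmy)-\langle\bmp^*,\bmy\rangle\ge B_i\log u_i(\bmy)-B_i>B_i\log u_i(\bmx_i^*)-B_i=B_i\log u_i(\bmx_i^*)-\langle\bmp^*,\bmx_i^*\rangle$, contradicting per-buyer maximality; hence $\bmx_i^*\in D(b_i,\bmp^*)$, which with market clearance gives the market equilibrium.

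I expect the main obstacle to be the careful handling of the extended-real-valued objective in (i) — ruling out limit points on the boundary where some $u_i$ vanishes — and, relatedly, justifying the saddle-point decomposition of the Lagrangian when $u_i$ is only assumed concave rather than differentiable; both are handled by arguing with the Lagrangian directly and using concavity for sufficiency of the first-order conditions, but they require some care. The homogeneity-driven identity $\langle\bmp^*,\bmx_i^*\rangle=B_i$ is the one genuinely market-specific step and is the natural place to double-check that the hypotheses (in particular $u_i(\ones)>0$ and homogeneity) are used correctly.
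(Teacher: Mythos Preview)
The paper does not actually prove this theorem: it is quoted verbatim as a result of \citet{gao2020first} and no argument for it appears anywhere in the appendix, so there is no in-paper proof to compare your proposal against.

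That said, your argument is correct and is essentially the classical Eisenberg--Gale argument that the cited reference records. The existence step is handled properly (compact feasible region, a strictly positive-utility feasible point via homogeneity and $u_i(\ones)>0$, and the lower-semicontinuity reasoning that rules out optima on the $u_i=0$ boundary). In part~(ii) the chain ``affine constraints $\Rightarrow$ strong duality $\Rightarrow$ saddle point $\Rightarrow$ complementary slackness $+$ per-buyer decoupling'' is the standard route, and your use of homogeneity---optimizing over the ray $\alpha\bmx_i^*$ to force $\langle\bmp^*,\bmx_i^*\rangle=B_i$---is exactly the device Eisenberg used and that \citet{gao2020first} invoke. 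Your final step deducing $\bmx_i^*\in D(b_i,\bmp^*)$ from per-buyer Lagrangian optimality is clean. The only cosmetic remark is that $c=\langle\bmp^*,\bmx_i^*\rangle\ge 0$ automatically (both vectors are nonnegative), so the ``rules out $c\le 0$'' phrasing could simply read ``rules out $c=0$''.
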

 
Based on \cref{thm:eg-gives-me-for-certain-ui}, it's easy to find that we can always assume $\sum_\iinn x_{ij} = Y_j$ while preserving the existence of market equilibrium, which states as follows.
\begin{proposition}
\label{prop:market-clear:equality}
Following the assumptions in \cref{thm:eg-gives-me-for-certain-ui}. For the following \emph{EG convex program with equality constraints},
\begin{equation}\label{eq:eisenberg-gale:equality}
\max \sum_{i=1}^n B_i \log u_i(\bmx_i)\quad {\rm s.t.} \ \sum_{i=1}^n x_{ij} = Y_j, \ \bmx\geq 0.
\end{equation}
Then, an optimal solution $\bmx^*$ together with its Lagrangian multipliers $\bmp^*\in \bbR^m_+$ constitute a market equilibrium.
Moreover, assume more that for each good $j$, there is some buyer $i$ such that $\frac{\partial u_i}{\partial x_{ij}}>0$ always hold whenever $u_i(\bmx_i) > 0$, then all prices are strictly positive in market equilibrium. As a consequence, \cref{eq:eisenberg-gale-primal} and \cref{eq:eisenberg-gale:equality} derive the same solution.
\end{proposition}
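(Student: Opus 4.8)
The plan is to deduce both assertions from \cref{thm:eg-gives-me-for-certain-ui} rather than re-derive a KKT characterization from scratch. Two preliminary facts do most of the work. First, a CCNH utility is automatically non-decreasing (``free disposal''): for $\bmx_i\le\bmy_i$ the map $t\mapsto u_i\bigl(\bmx_i+t(\bmy_i-\bmx_i)\bigr)$ is concave on $[0,\infty)$ and bounded below by $0$, hence non-decreasing, so $u_i(\bmx_i)\le u_i(\bmy_i)$. Second, the feasible region of \eqref{eq:eisenberg-gale:equality} is nonempty and compact and contains the point $x_{ij}=Y_j/n$, at which $u_i\ge\bigl((\min_j Y_j)/n\bigr)u_i(\ones)>0$ for every $i$ by monotonicity and homogeneity; so \eqref{eq:eisenberg-gale:equality} attains a finite optimum at some $\bmx^*$ with $u_i(\bmx^*_i)>0$ for all $i$.

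For the first assertion I would first show that \eqref{eq:eisenberg-gale:equality} and \eqref{eq:eisenberg-gale-primal} have the same optimal value: one direction is immediate since the equality-constrained feasible set is contained in the other, and for the converse I take an optimum $\hat{\bmx}$ of \eqref{eq:eisenberg-gale-primal} (which exists by \cref{thm:eg-gives-me-for-certain-ui}) and reassign the leftover supply $Y_j-\sum_i\hat{x}_{ij}\ge 0$ of each good to buyer $1$, which by monotonicity does not lower the objective and produces a point feasible for \eqref{eq:eisenberg-gale:equality}. Hence every optimal $\bmx^*$ of \eqref{eq:eisenberg-gale:equality} is also optimal for \eqref{eq:eisenberg-gale-primal}. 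It remains to identify the multipliers: stationarity for \eqref{eq:eisenberg-gale:equality} at $\bmx^*$ reads $p^*_j=\frac{B_i}{u_i(\bmx^*_i)}\,g_{ij}+\mu^*_{ij}$ with $\mu^*_{ij}\ge 0$, $\mu^*_{ij}x^*_{ij}=0$, and $g_{ij}$ a coordinate of a supergradient of $u_i$ at $\bmx^*_i$; monotonicity forces $g_{ij}\ge 0$, hence $\bmp^*\in\bbR^m_+$. Since $\sum_i x^*_{ij}=Y_j$ makes primal feasibility and complementary slackness for \eqref{eq:eisenberg-gale-primal} automatic, $\bmp^*$ is also an optimal Lagrange multiplier vector for \eqref{eq:eisenberg-gale-primal}, and \cref{thm:eg-gives-me-for-certain-ui} then gives that $(\bmx^*,\bmp^*)$ is a market equilibrium (market clearance with equality holding is trivially met).

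For the second assertion, fix a good $j$ and let $i$ be a buyer with $\frac{\partial u_i}{\partial x_{ij}}>0$ whenever $u_i(\bmx_i)>0$. Since $u_i(\bmx^*_i)>0$, the stationarity identity gives $p^*_j\ge\frac{B_i}{u_i(\bmx^*_i)}\frac{\partial u_i}{\partial x_{ij}}(\bmx^*_i)>0$; as $j$ was arbitrary, the equilibrium price vector $\bmp^*$ is strictly positive in every coordinate. For the last claim, note that for any optimum $\hat{\bmx}$ of \eqref{eq:eisenberg-gale-primal} the pair $(\hat{\bmx},\bmp^*)$ satisfies the KKT conditions of \eqref{eq:eisenberg-gale-primal} (a primal and a dual optimum always do, by strong duality), so complementary slackness together with $\bmp^*>0$ forces $\sum_i\hat{x}_{ij}=Y_j$ for every $j$; hence $\hat{\bmx}$ is feasible for \eqref{eq:eisenberg-gale:equality} and, by the value equality, optimal for it. Together with the reverse inclusion from the first part, \eqref{eq:eisenberg-gale-primal} and \eqref{eq:eisenberg-gale:equality} have the same optimal solution set.

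The obstacles I anticipate are technical rather than conceptual. The main one is making the stationarity statement precise when the $u_i$ are merely concave: this requires the superdifferential and the chain rule for $\log u_i$, checking an affine-constraint qualification so that strong duality and KKT hold for both programs, and fixing the meaning of the hypothesis ``$\frac{\partial u_i}{\partial x_{ij}}>0$'' in the non-differentiable case (e.g.\ as a statement about supergradient coordinates). A second, subtler point is justifying that an optimal multiplier vector of the equality-constrained program is verbatim one of the inequality-constrained program; this is exactly where non-negativity of $\bmp^*$, i.e.\ free disposal, is used, and — because the first assertion carries no extra hypothesis on the $u_i$ — this identification cannot be side-stepped there.
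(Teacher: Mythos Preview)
Your proposal is correct and, for the second assertion and its consequence, is essentially the same argument as the paper's: both use the KKT stationarity inequality $p_j^*\ge\frac{B_i}{u_i(\bmx_i^*)}\frac{\partial u_i}{\partial x_{ij}}(\bmx_i^*)$ together with the extra hypothesis to force $p_j^*>0$, and then complementary slackness in \eqref{eq:eisenberg-gale-primal} to force equality in the supply constraints.

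Where you differ is in the treatment of the first assertion. The paper's proof is terse here: it writes down the Lagrangian of \eqref{eq:eisenberg-gale:equality}, records the stationarity condition, and otherwise leans entirely on \cref{thm:eg-gives-me-for-certain-ui} without spelling out why the multipliers of the equality-constrained program coincide with those of \eqref{eq:eisenberg-gale-primal}. Your route is more explicit: you first prove free disposal from CCNH (a point the paper uses implicitly but never states), use it to show the two programs share optimal values and that any equality-constrained optimum is also an inequality-constrained optimum, then argue that the equality multiplier $\bmp^*$ is nonnegative and satisfies the KKT system of \eqref{eq:eisenberg-gale-primal}, so \cref{thm:eg-gives-me-for-certain-ui} applies. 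This buys you a self-contained justification of $\bmp^*\in\bbR^m_+$ and of the multiplier transfer, neither of which the paper addresses. You also flag the non-differentiability issue, which the paper's proof sidesteps by writing partial derivatives as if the $u_i$ were smooth; your supergradient formulation is the honest way to handle general CCNH utilities.
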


We leave all proofs to \cref{app:proofs}.
Since the additional assumption in \cref{prop:market-clear:equality} is fairly weak,
without further clarification, we always assume the conditions in \cref{prop:market-clear:equality} hold and the market clearance condition becomes
    $\sum_\iinn x_{ij} = Y(g_j),\ \forall \jinm$.



\section{MarketFCNet}
\label{sec:FCNet}

In this section, we introduce the MarketFCNet (denoted as Market Fully-Connected Network) approach to solve the contextual market equilibrium in the previous section.
The key point of MarketFCNet is to design an unbiased estimator of an optimization program whose solution coincides with the market equilibrium.
MarketFCNet does not rely on the number of buyers, making it an advantage to fit the infinite-buyer case without scaling on computational complexity.

\subsection{Problem Reformulation}
Following the idea of differentiable economics \citep{differentiable-economy:dutting2023optimal}, we utilize parameterized models to represent the allocation of good $j$ to buyer $i$, denoted as $x_\theta(b_i,g_j)$, and denote it as the allocation network, where $\theta$ is the network parameter.
Given buyer $i$ and good $j$, the network can automatically compute the allocation $x_{ij} = x_\theta(b_i,g_j)$.
The allocation to buyer $i$ is represented as $\bmx_i = \bmx_\theta(b_i,\bmg)$ and the allocation matrix is represented as $\bmx = \bmx_\theta(\bmb,\bmg)$.
Then the market clearance constraint can be reformulated as $\sum_\iinn x_\theta(b_i,g_j) = Y(g_j), \forall \jinm$ and the price constraint can be reformulated as $\bmx_\theta(\bmb,\bmg) \ge 0$.
Let $\calU(\calB)$ be the uniform distribution on the discrete set $\calB = \{b_i:\iinn\}$, then the EG program \eqref{eq:eisenberg-gale-primal} becomes,
\begin{equation}
\label{eq:eisenberg-gale:FC}
\begin{aligned}
    \max_{x_\theta} &\quad \OBJ(x_\theta) = \bbE_{b\sim \calU(\calB)} [B(b) \log u(b;\bmx_\theta(b,\bmg))]
    \\
    \st &\quad \bbE_{b\sim \calU(\calB)} [x_\theta(b,g_j)] = Y(g_j)/n, \forall \jinm
    \\
    &\quad \bmx_\theta(\bmb,\bmg) \ge 0
\end{aligned}\tag{EG-FC}
\end{equation}

For simplicity, we take $Y(g_j)/n \equiv 1$ for all $g_j$ and omit $\bbE_{b\sim \calU(\calB)}$ as $\bbE_{b}$ when the context is clear.
By this reformulation, the number of buyers, $n$, disappears from the program \eqref{eq:eisenberg-gale:FC}. As a consequence, MarketFCNet can potentially capture the scenarios even buyers are infinite or follow some distribution $\calF \in \Delta(\calB)$.

\subsection{Optimization}
The second constraint in \eqref{eq:eisenberg-gale:FC} can be hardcoded with network architecture (for example, a post-processed element-wise softplus function $\sigma(x) = \log(1 + \exp(x))$ that maps the set of real numbers to the set of positive numbers).
To address the first constraint, notice that the prices of goods are simply the Lagrangian multipliers of the first constraint in \eqref{eq:eisenberg-gale:FC}. 
Therefore, we employ the Augmented Lagrange Multiplier Method (ALMM) to explicitly extract the multipliers and solve the optimization problem \eqref{eq:eisenberg-gale:FC}.
We define $\calL_\rho(x_\theta, \bm{\lambda})$ as the Lagrangian where $\bm{\lambda}\in\bbR^m$ is the multipliers and $\rho$ is the quadratic penalty term.
The Lagrangian has the form as follows:
\begin{equation}
\begin{aligned}
    \calL_\rho(x_\theta;\bmlam) =& - \OBJ(x_\theta) + \sum_{j=1}^m \lambda_j \left(\bbE_b [ x_\theta(b,g_j) ] - 1\right)
    + \frac{\rho}{2} \sum_{j=1}^m \left(\bbE_b [ x_\theta(b,g_j) ] - 1\right)^2
\end{aligned}    
\end{equation}
Exactly computing the objective function seems intractable due to the extremely large and potentially infinite buyer size.
Therefore, we follow the framework in learning theory culture that many first-order algorithms can be applied (\eg, SGD, Adam) with only an unbiased gradient of the objective function \citep{SGD:amari1993backpropagation,SGD:bottou2010large}. 
Given such unbiased gradients as oracle, we provide our pseudo-codes in \cref{pseudo:FC}.
An illustration of the training procedure is also provided in \cref{fig:algorithm}.

\begin{figure*}[t]
    \centering
    \caption{Training process of MarketFCNet. On each iteration, the batch of $M$ independent buyers are drawn. each buyer and each good are represented as $k$-dimension context. The $(i,j)$'th element in the allocation matrix represents the allocation computed from $i$'th buyer and $j$'th good.
    MarketFCNet training process alternates between the training of allocation network and prices. The training of allocation network needs to achieve an unbiased estimator $\widehat{\calL}_\rho(x_\theta;\lambda)$ of the loss function $\calL_\rho(x_\theta;\lambda)$, followed by gradient descent. The training of prices need to get an unbiased estimator $\widehat{\Delta}\lambda_j$ of $\Delta \lambda_j$, followed by ALMM updating rule $\lambda_j \leftarrow \lambda_j + \beta_t \widehat{\Delta}\lambda_j$.}
    \includegraphics[width=0.9\textwidth]{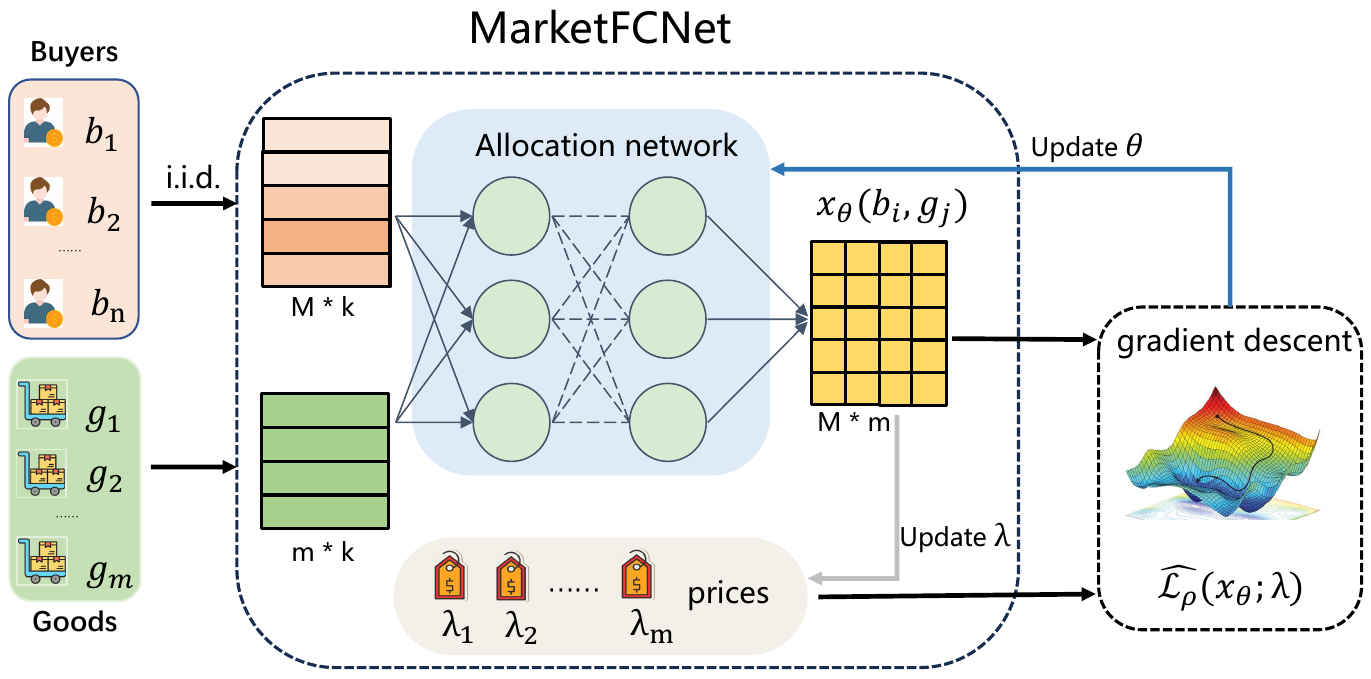}
    \label{fig:algorithm}
\end{figure*}

\begin{algorithm}[t]
\caption{MarketFCNet}
\label{pseudo:FC}
\KwIn{An oracle of buyer sampler $b \sim \calU(\calB)$, goods $g_1,...,g_m$, batch size $M_1$ and $M_2$, quadratic penalty term $\rho$, iteration $K$ for optimizing Lagrangian, step size $(\beta_t)_{t=1}^\infty$ for optimizing multipliers.}
\KwOut{Allocation network $x_\theta(b,g)$, price $p_j$ for each good $g_j$.}
Initialize an allocation network $x_\theta(b,g)$ and multipliers $\{\lambda_j\}_\jinm$.\\
\For{$t = 1,2,...$ until converged}
{
    \For{$k = 1,2,... K$}
    {
        Get an unbiased estimator $\hat{\calL}_\rho(x_\theta;\bmlam)$ with batch size $M_1$ such that $ \bbE[\hat{\calL}_\rho(x_\theta;\bmlam)] = \calL_\rho(x_\theta;\bmlam)$.\\
        Call the optimizer to optimize $\theta$ on $\hat{\calL}_\rho(x_\theta;\bmlam)$ for one step.\\
    }
    \For{$j=1,...,m$}
    {
        Get an unbiased estimator $\hat{\Delta} \lambda_j$ with batch size $M_2$ such that $\bbE[\hat{\Delta} \lambda_j] = \Delta \lambda_j \coloneqq \rho (\bbE_{b}[x_\theta(b,g_j)] - 1 )$.\\
        Updates $\lambda_j$ with $\lambda_j \leftarrow \lambda_j + \beta_t \Delta \lambda_j $.\\
    }
}
\textbf{return} $x_\theta(b,g)$,$\{\lambda_j\}_\jinm$.
\end{algorithm}

Let $X(w)$ be a deterministic function where $w$ is a (potentially high-dimensional) random variable with distribution $\calP(w)$ and $x$ be a real number. We call $X(w)$ is an unbiased estimator of $x$ if $\bbE_{w\sim \calP(w)}[X(w)] = x$.

From \cref{pseudo:FC}, it's clear that in order to finish the ALMM algorithm, we need to obtain unbiased estimators of the following two expressions.
\begin{itemize}
    \item An unbiased estimator of $ \calL_\rho(x_\theta;\bmlam)$.
    \item An unbiased estimator of $\Delta \lambda_j$, where $\Delta \lambda_j$ is given by 
    $\textstyle \Delta \lambda_j = \rho \left( \mathbb{E}_b [x_\theta(b, g_j)] - 1 \right)$.
\end{itemize}

The next section introduces how to get unbiased estimators for these expressions.

\subsection{Extract Unbiased Estimators}

\paragraph{Unbiased estimator of $\Delta \lambda_j$.}

To get an unbiased estimator of $\Delta \lambda_j$, it's enough to obtain an unbiased estimator of $\bbE_b[x_\theta(b,g_j)]$. 
By applying Monte Carlo method, we can choose batch size $M$ and sample $b_1, b_2,...,b_M \sim \calU(\calB)$, then 
$\frac{1}{M} \sum_{i=1}^M x_\theta(b_i, g_j)$ forms an unbiased estimator.

Define $\hat{\Delta}_j(b_1,\dots, b_M) = \rho (\frac{1}{M}\sum_{i=1}^M x_\theta(b_i, g_j) - 1)$. It's clear to see that 
$\bbE_{b_i \iidd \calU(\calB)}[\hat{\Delta}_j(b_1,\dots, b_M)] = \Delta_j$.

\paragraph{Unbiased estimator of $\calL_p(x_\theta;\bmlam)$.}

Notice that $\calL_p(x_\theta;\bmlam)$ is the sum of three terms: the objective term $\OBJ(x_\theta)$, the multiplier term $\sum_{j=1}^m \lambda_j \left(\bbE_b [ x_\theta(b,g_j) ] - 1\right)$ as well as the quadratic term $\frac{\rho}{2} \sum_{j=1}^m $ $\left(\bbE_b [ x_\theta(b,g_j) ] - 1\right)^2$.
We only need to get an unbiased estimator for each term.
For the first and the second term, the technique to achieve an unbiased estimator is similar to those in $\Delta \lambda_j$ and thus omitted. 

For the last term, the quadratic dependency on expectation requires additional attention. Notice that
\begin{align}
    & \left( \bbE_b \left[x_\theta(b,g_j) \right] - 1\right)^2
    = \left( \bbE_b \left[x_\theta(b,g_j) \right] - 1\right) \cdot \left( \bbE_{b'} \left[x_\theta(b',g_j) \right] - 1\right)
\end{align}
Therefore, we can sample $b_1,...,b_M, b'_1,...,b'_M \sim U(\calB)$ and compute
\begin{align}
    \frac{\rho}{2} \cdot \frac{1}{M} \sum_{i=1}^M \sum_{j=1}^m \left( x_\theta(b_i,g_j) - 1\right) \cdot \left( x_\theta(b'_i,g_j) - 1\right)
\end{align}
which provides an unbiased estimator for the last term.

Formally, we formulate the unbiased estimator in following proposition.

\begin{proposition}[Unbiased estimator of $\calL_p(x_\theta;\bmlam)$]
\label{prop:unbiased}
Define 
\begin{align*}
    &\hat{\calL}_\rho(x_\theta;\bm{\lambda};b^1,...,b^{2M}) = - \frac{1}{M} \sum_\iinM \left[ B(b^i) \log u(b^i;\bmx_\theta(b^i,\bmg)) \right]
    \\
    +& \sum_\jinm \lambda_j \left( \frac{1}{M} \sum_\iinM x_\theta(b^i,g_j) - 1 \right)
    + \frac{\rho}{2M}\sum_\jinm \sum_\iinM \left( x_\theta(b^i,g_j) - 1 \right) \left( x_\theta(b^{i+M},g_j) - 1 \right).
\end{align*}

Then, $\hat{\calL}_\rho(x_\theta;\bm{\lambda};b^1,...,b^{2M})$ is an unbiased estimator of $\calL_\rho(x_\theta;\bm{\lambda})$, given the distribution ${b^k}_{1\le k\le 2M} \iidd \calU(\calB)$.
\end{proposition}

\section{Performance Measures of Market Equilibrium}
\label{sec:performance}

In this section, we propose \emph{Nash Gap} to measure the performance of an approximated market equilibrium and show that Nash Gap preserves the economic interpretation for market equilibrium.
To introduce Nash Gap, we first introduce two types of welfare, Log Nash Welfare and Log Fixed-price Welfare in \cref{def:LNW} and \cref{def:LFW}, respectively.

\begin{definition}[Log Nash Welfare]
\label{def:LNW}
The Log Nash Welfare (abbreviated as $\LNW$) is defined as
\begin{align}
    \LNW(\bmx) = \frac{1}{B_\total} \sum_\iinn B_i \log u_i(\bmx_i),
\end{align}
where $B_\total = \sum_\iinn B_i$ is the total budget for buyers.
\end{definition}

Log Nash Welfare is fundamentally the logarithm of Nash Welfare with an economic interpretation of the geometric mean of buyers' utilities, which has been widely studied in literature \citep{NashWel:banerjee2022online, NashWel:huang2023online}.


\begin{definition}[Fixed-price utility and Log Fixed-price Welfare]
\label{def:LFW}
We define the fixed-price utility for buyer $i$ as,
\begin{align}
    \tilde{u}(b_i;\bmp) = \max_{\bmx_i} \{ u(b_i;\bmx_i) \mid \bmx_i\in\bbR^m_+, \langle \bmp ,\bmx_i \rangle \le B(b_i) \} 
\end{align}
which represents the optimal utility that buyer $i$ can obtain at the price level $\bmp$, regardless of the market clearance constraints.
The Log Fixed-price Welfare (abbreviated as $\LFW$) is defined as the weighted sum of the logarithm of Fixed-price utility, 
\begin{align}
    \mathrm{LFW}(\bmp) = \frac{1}{B_\total} \sum_\iinn B_i \log \tilde{u}_i(\bmp)
\end{align}
\end{definition}

Based on these definitions, we present the definition of Nash Gap.
\begin{definition}[Nash Gap]
    We define Nash Gap (abbreviated as NG) as the difference between Log Nash Welfare and Log Fixed-price Welfare, \ie
    \begin{align}
    \GAP(\bmx,\bmp) = \LFW(\bmp) - \LNW(\bmx)
    \end{align}
\end{definition}

\subsection{Properties of Nash Gap}

To show why $\GAP$ is useful in the measure of market equilibrium, we first observe below statement, 
\begin{proposition}[Price constraints]
\label{prop:price-constraint}
If $(\bmx, \bmp)$ constitutes a market equilibrium, the following identity always holds,
\begin{align}
    \sum_\jinm p_j Y_j = \sum_\iinn B_i
\end{align}
\end{proposition}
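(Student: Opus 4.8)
The plan is to derive the identity directly from the two equilibrium conditions together with the budget-exhaustion fact supplied by \cref{thm:eg-gives-me-for-certain-ui}. First I would recall that, by \emph{buyer optimality} and the furthermore-clause of \cref{thm:eg-gives-me-for-certain-ui}, every buyer spends her entire budget, i.e.\ $\langle \bmp, \bmx_i\rangle = B_i$ for all $\iinn$; summing this over $i$ gives $\sum_\iinn \langle \bmp, \bmx_i\rangle = \sum_\iinn B_i$. Next I would rewrite the left-hand side by exchanging the order of summation: $\sum_\iinn \langle \bmp, \bmx_i\rangle = \sum_\iinn \sum_\jinm p_j x_{ij} = \sum_\jinm p_j \bigl(\sum_\iinn x_{ij}\bigr)$.

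Then I would invoke \emph{market clearance}: for each good $j$, either $p_j > 0$, in which case $\sum_\iinn x_{ij} = Y_j$, or $p_j = 0$, in which case $p_j \bigl(\sum_\iinn x_{ij}\bigr) = 0 = p_j Y_j$ regardless of the allocation. In either case $p_j \bigl(\sum_\iinn x_{ij}\bigr) = p_j Y_j$, so $\sum_\jinm p_j \bigl(\sum_\iinn x_{ij}\bigr) = \sum_\jinm p_j Y_j$. Chaining the three equalities yields $\sum_\jinm p_j Y_j = \sum_\iinn B_i$, which is exactly the claimed identity. Under the standing assumptions of \cref{prop:market-clear:equality} one may even skip the case analysis, since market clearance holds with equality for every good; but the argument above is self-contained and does not require that strengthening.

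The proof is essentially a one-line manipulation, so there is no real obstacle; the only point that needs a word of care is justifying $\langle \bmp, \bmx_i\rangle = B_i$. If one does not wish to cite the furthermore-clause of \cref{thm:eg-gives-me-for-certain-ui}, this equality can instead be argued from first principles: homogeneity of $u_i$ forces any utility-maximizer in the demand set to exhaust the budget (otherwise scaling up the bundle by a factor slightly larger than $1$ stays feasible and strictly increases utility, using $u_i(\ones)>0$ to rule out the degenerate all-zero optimum), so $\bmx_i \in D(b_i;\bmp)$ implies $\langle \bmp, \bmx_i\rangle = B_i$. I would include this short remark so the proposition stands without leaning on the Lagrangian characterization.
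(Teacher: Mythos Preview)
Your proof is correct and follows essentially the same route as the paper: invoke $\langle \bmp, \bmx_i\rangle = B_i$ from \cref{thm:eg-gives-me-for-certain-ui}, sum over $i$, swap the order of summation, and apply market clearance to replace $\sum_i x_{ij}$ by $Y_j$. Your additional case analysis for $p_j=0$ and the alternative homogeneity argument for budget exhaustion are nice touches that make the proof slightly more self-contained than the paper's version, but the underlying argument is the same.
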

Below, we state the most important theorem in this paper.
\begin{theorem}
\label{thm:gap-largerthan0}
Let $(\bmx,\bmp)$ be a pair of allocation and price. Assuming the allocation satisfies market clearance and the price meets price constraint, then we have $\GAP(\bmx,\bmp) \ge 0$.

Moreover, $\GAP(\bmx,\bmp) = 0$ if and only if $(\bmx,\bmp)$ is a market equilibrium.
\end{theorem}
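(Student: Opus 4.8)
The plan is to establish the inequality $\GAP(\bmx,\bmp)\ge 0$ by a term-by-term comparison between $\LFW(\bmp)$ and $\LNW(\bmx)$, using convexity of $-\log$ together with the two global constraints (market clearance and the price constraint $\sum_j p_j Y_j = \sum_i B_i$), and then to pin down the equality case by tracking when every inequality used becomes tight. First I would recall that by homogeneity of $u_i$, for any allocation $\bmx_i$ with $\langle\bmp,\bmx_i\rangle = B_i$ we have $\tilde u_i(\bmp) \ge u_i(\bmx_i)$, so $\LFW(\bmp)\ge \LNW(\bmx)$ holds \emph{trivially} whenever each buyer already spends exactly her budget. The substance of the theorem is that this remains true \emph{without} assuming $\langle\bmp,\bmx_i\rangle = B_i$ for the given pair — only the aggregate price constraint and market clearance are assumed — and this is where the argument needs care.

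The key step is a scaling argument combined with a duality/Fenchel-type bound on $\tilde u_i(\bmp)$. Concretely, I would use the fact (provable from CCNH and a Lagrangian argument, mirroring \cref{thm:eg-gives-me-for-certain-ui}) that for homogeneous concave $u_i$,
\begin{equation}
\log \tilde u_i(\bmp) \;=\; \log B_i \;-\; \log\!\Big(\min_{\bmx_i\ge 0,\, u_i(\bmx_i)\ge 1}\langle\bmp,\bmx_i\rangle\Big),
\end{equation}
i.e. $\tilde u_i(\bmp) = B_i / c_i(\bmp)$ where $c_i(\bmp)$ is the minimum cost of one unit of utility at prices $\bmp$. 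Then for the given allocation $\bmx_i$, writing $s_i = \langle\bmp,\bmx_i\rangle$ for buyer $i$'s spending, the bundle $\bmx_i / u_i(\bmx_i)$ achieves one unit of utility (by homogeneity) at cost $s_i/u_i(\bmx_i)$, so $c_i(\bmp)\le s_i/u_i(\bmx_i)$, giving
\begin{equation}
\log\tilde u_i(\bmp) \;\ge\; \log B_i + \log u_i(\bmx_i) - \log s_i.
\end{equation}
Multiplying by $B_i$, summing, and dividing by $B_\total$ yields
\begin{equation}
\LFW(\bmp) - \LNW(\bmx) \;\ge\; \frac{1}{B_\total}\sum_i B_i\big(\log B_i - \log s_i\big) \;=\; \frac{1}{B_\total}\sum_i B_i\log\frac{B_i}{s_i}.
\end{equation}
Now market clearance ($\sum_i x_{ij} = Y_j$) and the price constraint give $\sum_i s_i = \sum_j p_j\sum_i x_{ij} = \sum_j p_j Y_j = \sum_i B_i = B_\total$, so the weights $B_i/B_\total$ and the points $s_i$ satisfy $\sum_i (B_i/B_\total)\, s_i = B_\total$... more usefully, $\sum_i (B_i/B_\total)(s_i/B_i)\cdot$ — here I would instead apply the log-sum inequality (equivalently, nonnegativity of KL divergence / Jensen for $-\log$) to the quantity $\sum_i B_i\log(B_i/s_i)$ subject to $\sum_i B_i = \sum_i s_i = B_\total$, which is exactly $B_\total\cdot \mathrm{KL}\big((B_i/B_\total)_i \,\|\, (s_i/B_\total)_i\big) \ge 0$. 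This delivers $\GAP(\bmx,\bmp)\ge 0$.

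For the equality case, $\GAP(\bmx,\bmp)=0$ forces (a) the log-sum/KL inequality to be tight, i.e. $s_i = B_i$ for every $i$ — each buyer spends exactly her budget — and (b) the bound $c_i(\bmp) = s_i/u_i(\bmx_i)$ to be tight for every $i$, i.e. $\bmx_i$ is a cost-minimizing way to attain utility $u_i(\bmx_i)$ at prices $\bmp$, which together with $\langle\bmp,\bmx_i\rangle = B_i$ is precisely the statement $\bmx_i\in D(b_i;\bmp)$ (buyer optimality). Combined with the assumed market clearance, these are exactly the two defining conditions of a market equilibrium, so $(\bmx,\bmp)$ is an equilibrium; conversely, if $(\bmx,\bmp)$ is an equilibrium then by \cref{thm:eg-gives-me-for-certain-ui} each buyer spends her full budget and her bundle is optimal, making both inequalities equalities and hence $\GAP = 0$. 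I would present the converse direction first (it is immediate from \cref{thm:eg-gives-me-for-certain-ui} and \cref{prop:price-constraint}) to motivate the harder forward direction.

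The main obstacle I anticipate is the rigorous justification of the cost-function identity $\tilde u_i(\bmp) = B_i/c_i(\bmp)$ and, more delicately, the characterization that tightness of $c_i(\bmp)\le s_i/u_i(\bmx_i)$ \emph{plus} full budget expenditure implies $\bmx_i\in D(b_i;\bmp)$. This requires the homogeneity of $u_i$ in an essential way (to rescale demand bundles freely) and a short convex-duality argument to rule out the degenerate case $u_i(\bmx_i) = 0$; the assumption $u_i(\ones)>0$ and the weak regularity hypothesis carried over from \cref{prop:market-clear:equality} should handle the boundary behavior. Everything else is an application of Jensen's inequality / nonnegativity of KL divergence and is routine.
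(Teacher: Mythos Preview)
Your proposal is correct and rests on the same two ingredients as the paper's proof: (i) the homogeneity-based scaling bound $\tilde u_i(\bmp)\ge (B_i/s_i)\,u_i(\bmx_i)$ with $s_i=\langle\bmp,\bmx_i\rangle$, and (ii) recognising $\sum_i B_i\log(B_i/s_i)$ as a nonnegative KL divergence once market clearance and the price constraint force $\sum_i s_i=B_\total$. Your cost-function identity $\tilde u_i(\bmp)=B_i/c_i(\bmp)$ is just a repackaging of (i); the paper states the same bound directly by noting that $\tfrac{B_i}{s_i}\bmx_i$ is budget-feasible.

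The structural difference is that the paper takes a two-step route through the equilibrium point: it applies the scaling/KL argument with the \emph{equilibrium} allocation $\bmx^*$ in place of your $\bmx$ to get $\LFW(\bmp)\ge\LNW(\bmx^*)$, and then invokes optimality of $\bmx^*$ in \eqref{eq:eisenberg-gale-primal} to conclude $\LNW(\bmx^*)\ge\LNW(\bmx)$. You instead apply the scaling/KL bound directly to the given pair $(\bmx,\bmp)$, which is shorter and avoids appealing to the existence of an equilibrium or to EG optimality. The payoff of your direct route is a cleaner equality analysis: tightness of your single chain immediately yields $s_i=B_i$ (from KL) and $\bmx_i\in D(b_i;\bmp)$ (from tightness of the scaling bound), whereas the paper's equality argument has to unwind the two-step pivot and is somewhat tangled in its notation. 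Conversely, the paper's detour buys \cref{cor:market-equilibrium-is-saddle-point} essentially for free, since both inequalities $\LFW(\bmp)\ge\OPT$ and $\OPT\ge\LNW(\bmx)$ are established separately.
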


\cref{thm:gap-largerthan0} shows that Nash Gap is an ideal measure of the solution concept of market equilibrium since it holds the following properties,
\begin{itemize}
    \item $\GAP(\bmx,\bmp)$ is continuous on the inputs $(\bmx,\bmp)$.
    \item $\GAP(\bmx,\bmp)\ge 0$ always hold. (under conditions in \cref{thm:gap-largerthan0})
    \item $\GAP(\bmx,\bmp)=0$ if and only if $(\bmx,\bmp)$ meets the solution concept.
    \item The computation of $\GAP$ does not require the knowledge of an equilibrium point $(\bmx^*,\bmp^*)$
\end{itemize}

Since some may argue that $\GAP(\bmx,\bmp)$ is not intuitive to understand, we consider some more intuitive measures, the Euclidean distance to the market equilibrium, \ie, $||\bmx - \bmx^*||$ and $||\bmp - \bmp^*||$, as well as the difference on Weighted Social Welfare, $|\WSW(\bmx) - \WSW(\bmx^*)|$, where $\WSW(\bmx) \coloneqq \sum_\iinn \frac{B_i}{B_\total} u_i(\bmx_i)$, and show the connection between $\GAP$ and these intuitive measures.

\begin{proposition}[Informal]
\label{prop:gap-epsilon}
Under some technical assumptions 
, if $\GAP(\bmx,\bmp) = \varepsilon$, we have:
\begin{itemize}
    \item $||\bmp - \bmp^*|| = O(\sqrt{\varepsilon})$.
    \item $||\bmx_i - \bmx_i^*|| = O(\sqrt{\varepsilon})$ for all $i$.
    \item $|\WSW(\bmx) - \WSW(\bmx^*)| = O(\varepsilon)$.
\end{itemize}
\end{proposition}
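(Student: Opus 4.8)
The plan is to turn the single scalar certificate $\GAP(\bmx,\bmp)=\varepsilon$ into the three geometric estimates by first splitting it into a primal and a dual sub-optimality and then feeding each piece into a quadratic-growth (error-bound) inequality at the equilibrium. Since $\LNW$ is, up to a positive constant, the \eqref{eq:eisenberg-gale-primal} objective and $\bmx^*$ maximizes it over the clearance polytope, while $\bmp^*$ minimizes $\LFW$ over the price set $\{\bmp\ge 0:\langle\bmp,\bmY\rangle=B_\total\}$ (\cref{prop:price-constraint}) with $\LFW(\bmp^*)=\LNW(\bmx^*)$ because $\GAP(\bmx^*,\bmp^*)=0$ (\cref{thm:gap-largerthan0}), one gets at once $0\le\LNW(\bmx^*)-\LNW(\bmx)\le\varepsilon$ and $0\le\LFW(\bmp)-\LFW(\bmp^*)\le\varepsilon$. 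For the price bound I would then use that, under the technical assumptions (bounded domain, a unique and nondegenerate interior equilibrium, the $u_i$ strictly concave transverse to the budget rays), $\LFW$ is $C^2$ near $\bmp^*$ with Hessian positive definite on the tangent space of the price hyperplane, so $\LFW(\bmp)-\LFW(\bmp^*)\ge\mu\|\bmp-\bmp^*\|^2$ and hence $\|\bmp-\bmp^*\|=O(\sqrt\varepsilon)$.

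For the allocation bound I would mine the primal gap more finely. Put $s_i=\langle\bmp,\bmx_i\rangle$; by homogeneity of $u_i$ the rescaled bundle $\bar\bmx_i:=(B_i/s_i)\,\bmx_i$ is affordable at budget exactly $B_i$, so $u_i(\bmx_i)\le\tilde u_i(\bmp)\,s_i/B_i$ for every $i$, and since $\sum_i s_i=\langle\bmp,\bmY\rangle=\sum_i B_i$, Jensen's inequality gives $\sum_i B_i\log(s_i/B_i)\le 0$. Substituting these into $\varepsilon B_\total=B_\total\LFW(\bmp)-B_\total\LNW(\bmx)=\sum_i B_i\log\tilde u_i(\bmp)-\sum_i B_i\log u_i(\bmx_i)$ exhibits it as a sum of two nonnegative parts: a \emph{budget defect} $\sum_i B_i\,h(s_i/B_i)$ with $h(t)=t-1-\log t\ge 0$, and a \emph{per-buyer utility defect} $\sum_i B_i\big(\log(\tilde u_i(\bmp)\,s_i/B_i)-\log u_i(\bmx_i)\big)$, each therefore $\le\varepsilon B_\total$. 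Since $h(t)\gtrsim(t-1)^2$ on any bounded range, the budget defect gives $|s_i-B_i|=O(\sqrt\varepsilon)$ termwise, so $\|\bmx_i-\bar\bmx_i\|=O(\sqrt\varepsilon)$; and, because $u_i(\bar\bmx_i)=(B_i/s_i)u_i(\bmx_i)$, the utility defect gives $0\le\log\tilde u_i(\bmp)-\log u_i(\bar\bmx_i)=O(\varepsilon)$, i.e.\ $\bar\bmx_i$ is an $O(\varepsilon)$-optimal, budget-exhausting bundle at prices $\bmp$. Under the technical assumptions $u_i$ is strongly concave on the budget frontier near $\bmx_i^*$ and the demand map $\bmp'\mapsto D(b_i;\bmp')$ is single-valued and locally Lipschitz at $\bmp^*$; hence $\|\bar\bmx_i-D(b_i;\bmp)\|=O(\sqrt\varepsilon)$ and $\|D(b_i;\bmp)-D(b_i;\bmp^*)\|=O(\|\bmp-\bmp^*\|)=O(\sqrt\varepsilon)$, and combining these with the previous display gives $\|\bmx_i-\bmx_i^*\|=O(\sqrt\varepsilon)$.

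For the welfare bound I would use that $\bmx^*$ also maximizes the \emph{concave} normalized welfare $\Phi(\bmx):=\sum_i (B_i/u_i^*)\,u_i(\bmx_i)$, with $u_i^*:=u_i(\bmx_i^*)$, over the clearance polytope: indeed $\nabla\Phi(\bmx^*)=\nabla\big[\sum_i B_i\log u_i(\cdot)\big](\bmx^*)$, which lies in the normal cone of the polytope at $\bmx^*$ because $\bmx^*$ solves \eqref{eq:eisenberg-gale-primal}, and a stationary point of a concave function over a convex set is a global maximizer. Pushing $\LNW(\bmx^*)-\LNW(\bmx)\le\varepsilon$ through $\log t\le t-1$ converts it into $\Phi(\bmx^*)-\Phi(\bmx)\le\varepsilon B_\total$, which together with $\Phi(\bmx)\le\Phi(\bmx^*)=B_\total$ confines $\Phi(\bmx)$ to a multiplicative $(1\pm\varepsilon)$ window about its equilibrium value; under the utility normalization adopted in the technical assumptions (equilibrium utilities rescaled to a common value, so that $\Phi$ and $B_\total\,\WSW$ agree up to a constant) this reads exactly $|\WSW(\bmx)-\WSW(\bmx^*)|=O(\varepsilon)$.

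The main obstacle is the curvature input invoked in the first two parts. CCNH utilities are homogeneous of degree one and hence never strongly concave (the Hessian always annihilates the bundle itself), and $\LFW$ may be only piecewise smooth, so neither the error bound for $\LFW$ at $\bmp^*$, nor the quadratic growth of $u_i$ on the budget frontier, nor the Lipschitz continuity and single-valuedness of the demand maps is available for free. Pinning down exactly which technical assumptions — restriction to directions transverse to the budget rays, uniqueness and nondegeneracy of the equilibrium's first-order system, and boundedness of the relevant sets — genuinely deliver these estimates is the delicate step, and is what is deferred to \cref{prf:prop:gap-epsilon}.
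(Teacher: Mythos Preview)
Your plan is sound and, for the price bound, coincides with the paper's: both pieces of $\GAP=\big(\LFW(\bmp)-\OPT\big)+\big(\OPT-\LNW(\bmx)\big)$ are nonnegative, and the paper establishes quadratic growth of $\LFW$ at $\bmp^*$ by a second-order Taylor expansion via Roy's identity (the technical hypothesis being that the aggregate demand Jacobian $H_X=\big(\sum_i\partial x_{ij}^*/\partial p_k(\bmp^*,B_i)\big)_{j,k}$ is nonsingular).

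The other two bounds you reach by genuinely different roads. For $\|\bmx_i-\bmx_i^*\|$ the paper argues \emph{directly} from $\OPT-\LNW(\bmx)\le\varepsilon$: Taylor-expanding $\sum_i B_i\log u_i$ at $\bmx^*$, the linear term cancels by market clearance, and under the hypothesis $\rank H(\bmx_i^*)=m-1$ (strong concavity transverse to the ray $\bmx_i^*$) the effective quadratic form $\tfrac{B_i^2}{u_i(\bmx_i^*)}H(\bmx_i^*)-\bmp^*\bmp^{*T}$ is shown to be negative definite, giving per-buyer quadratic growth in one shot. Your KL/budget-defect decomposition followed by a Lipschitz-demand transfer from $\bmp$ to $\bmp^*$ is correct but longer, couples the allocation bound to the price bound, and needs the extra regularity that demand is single-valued and Lipschitz near $\bmp^*$. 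Conversely, your welfare argument is cleaner than the paper's: the paper \emph{chains} its $\|\bmx_i-\bmx_i^*\|=O(\sqrt\varepsilon)$ bound with a second Taylor expansion of $\WSW$ (using the equal-utility assumption $u_i(\bmx_i^*)\equiv c$ to kill the first-order term and leave $O(\|\bmx-\bmx^*\|^2)=O(\varepsilon)$), whereas your observation that $\Phi(\bmx)=\sum_i (B_i/u_i(\bmx_i^*))\,u_i(\bmx_i)$ shares its gradient with the EG objective at $\bmx^*$, together with $\log t\le t-1$, extracts $|\WSW(\bmx)-\WSW(\bmx^*)|=O(\varepsilon)$ in one stroke from the $\LNW$ gap alone, bypassing the Hessian-rank condition entirely.
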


\cref{prop:gap-epsilon} means that intuitive measures (\eg, Euclidean distance to the equilibrium point) are upper bounded by Nash Gap with a square root rate, serving as a certificate that Nash Gap is an ideal measure.

Finally, we give a saddle-point explanation for Nash Gap.
\begin{corollary}
\label{cor:market-equilibrium-is-saddle-point}
Within market clearance and price constraint, we have
\begin{align}
    \min_\bmp \LFW(\bmp) = \max_\bmx \LNW(\bmx)
\end{align}
\end{corollary}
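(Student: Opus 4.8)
The plan is to read \cref{thm:gap-largerthan0} as a weak/strong duality statement and extract the identity essentially for free. The key observation is that \cref{thm:gap-largerthan0} already says: for every allocation $\bmx$ satisfying market clearance and every price vector $\bmp$ satisfying the price constraint, $\GAP(\bmx,\bmp) = \LFW(\bmp) - \LNW(\bmx) \ge 0$, i.e. $\LFW(\bmp) \ge \LNW(\bmx)$. Since this holds for \emph{all} such pairs simultaneously, I would first fix an arbitrary feasible $\bmp$ and take the supremum over feasible $\bmx$ to get $\LFW(\bmp) \ge \sup_{\bmx} \LNW(\bmx)$, and then take the infimum over feasible $\bmp$ to obtain
\begin{align}
    \inf_\bmp \LFW(\bmp) \;\ge\; \sup_\bmx \LNW(\bmx).
\end{align}

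For the matching reverse inequality I would invoke the existence of a market equilibrium. By \cref{thm:eg-gives-me-for-certain-ui} (existence of an optimal \eqref{eq:eisenberg-gale-primal} solution) together with \cref{prop:market-clear:equality}, there is a pair $(\bmx^*,\bmp^*)$ that is a market equilibrium; $\bmx^*$ satisfies market clearance by definition, and $\bmp^*$ satisfies the price constraint by \cref{prop:price-constraint}, so $(\bmx^*,\bmp^*)$ is feasible for both optimization problems. The ``moreover'' half of \cref{thm:gap-largerthan0} gives $\GAP(\bmx^*,\bmp^*)=0$, i.e. $\LFW(\bmp^*) = \LNW(\bmx^*)$. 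Chaining estimates,
\begin{align}
    \inf_\bmp \LFW(\bmp) \;\le\; \LFW(\bmp^*) \;=\; \LNW(\bmx^*) \;\le\; \sup_\bmx \LNW(\bmx),
\end{align}
and combining this with the previous display forces all of these quantities to coincide. In particular the two extrema are attained, at $\bmp^*$ and $\bmx^*$ respectively, so the $\inf$ and $\sup$ may legitimately be written as $\min$ and $\max$, which is exactly the claimed equality.

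There is no substantive obstacle here once \cref{thm:gap-largerthan0} is available; the argument is just a weak-duality-plus-strong-duality-at-the-optimum bookkeeping exercise. The one point I would be careful to state explicitly is that the extrema are genuinely attained — this is why the \emph{existence} clause of \cref{thm:eg-gives-me-for-certain-ui} (that \eqref{eq:eisenberg-gale-primal}, and hence $\LNW$ over market-clearing allocations, has a maximizer) is needed in addition to the equilibrium characterization. Conceptually, one can summarize the corollary as: $\LFW(\bmp)-\LNW(\bmx)$ is a nonnegative duality gap that vanishes precisely at market equilibria, so a market equilibrium is a saddle point of this pairing, and the displayed identity is the corresponding minimax (strong-duality) equation.
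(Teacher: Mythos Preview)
Your proposal is correct and matches the paper's intent: the paper gives no separate proof of \cref{cor:market-equilibrium-is-saddle-point}, treating it as an immediate consequence of \cref{thm:gap-largerthan0} (indeed, the two displayed inequalities \eqref{prf:gap:1} and \eqref{prf:gap:2} in the proof of \cref{thm:gap-largerthan0} are exactly the sandwich $\LFW(\bmp') \ge \LNW(\bmx^*) \ge \LNW(\bmx')$ you are using). Your explicit check that the extrema are attained, via the existence clause of \cref{thm:eg-gives-me-for-certain-ui}, is a nice touch that the paper leaves implicit.
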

\cref{cor:market-equilibrium-is-saddle-point} provides an economic interpretation for GAP. Market equilibrium can be seen as the saddle point over social welfare, and the social welfare for $\bmx$ can be actually implemented while the social welfare for $\bmp$ is virtual and desired by buyers. Nash Gap measures the gap between the ``desired welfare'' and the ``implemented welfare'' for buyers.

\subsection{Measures in General Cases}

Since $\GAP$ only works for $(\bmx,\bmp)$ that satisfies market clearance and price constraints, we generalize the measure of $\GAP$ to the full outcome space in this section, which drives us to design some reasonable measures that can measure the performance of all positive $(\bmx,\bmp)$.

We first notice that any equilibrium must satisfy the conditions of \emph{market clearance} and \emph{price constraint}, we first project arbitrary positive $(\bmx,\bmp)$ to the space where these constraints hold. 
Specifically, if we let
\begin{align}
    \alpha_j =& \frac{Y_j}{\sum_i x_{ij}},\quad \tilde{x}_{ij} = x_{ij} \cdot \alpha_j
    \qquad& 
    \beta = \frac{\sum_i B_i}{\sum_j Y_j p_j},\quad \tilde{p}_j = \beta \cdot p_j
\end{align}
then $(\tilde{\bmx}, \tilde{\bmp})$ satisfies these constraints and we consider $\GAP(\tilde{\bmx},\tilde{\bmp})$ as the equilibrium measure.

Besides, we also need to measure how far is the point $(\bmx,\bmp)$ to the space within the conditions of \emph{market clearance} and \emph{price constraint}.
we propose the following two measurements, called Violation of Allocation (abbreviated as $\VoA$) and Violation of Price (abbreviated as $\VoP$), respectively.
\begin{align}
    \VoA(\bmx) \coloneqq& \frac{1}{m} \sum_j |\log \alpha_j|,
    \qquad
    \VoP(\bmp) \coloneqq |\log \beta|
\end{align}
From the expressions of $\VoA$ and $\VoP$, we know that these two constraints hold if and only if $\VoA(\bmx) = 0$ and $\VoP(\bmp) = 0$.


We argue that this projection is of economic meaning. If $(\bmx,\bmp)$ constitutes a market equilibrium and we scale the budget with a factor of $\beta$, then $(\bmx,\beta\bmp)$ constitutes a market equilibrium in the new market. Similarly, if we scale the value for each buyer with factor $\bm{\alpha}^{-1}$ (here $\bm{\alpha}$ is a vector in $\bbR^m_+$ and $\bm{\alpha}^{-1}$ is element-wise inverse of $\bm{\alpha}$) and capacity with factor $\bm{\alpha}$, then, $(\bm{\alpha}\cdot \bmx, \frac{\bmp}{\bm{\alpha}})$ constitute a market equilibrium in the new market. 
These instances are evidence that market equilibrium holds a linear structure over market parameters. Therefore, a linear projection can eliminate the effect from linear scaling, while preserving the effect from orthogonal errors.

Notice that $\bmx = \tilde{\bmx}$ and $\bmp = \tilde{\bmp}$ if and only if $\VoA(\bmx)=0$ and $\VoP(\bmp) = 0$, respectively.
From \cref{thm:gap-largerthan0} We can easy derive following statement:
\begin{proposition}
\label{prop:measure-marketequilibrium}
For arbitrary $\bmx \in \bbR^{n\times m}_+,\bmp\in\bbR^m_+$, we have $\VoA(\bmx)\ge 0, \VoP(\bmp)\ge 0, \GAP(\tilde{\bmx}, \tilde{\bmp})\ge 0$ always hold. 
Moreover, $(\bmx, \bmp)$ is a market equilibrium if and only if $\VoA(\bmx) = \VoP(\bmp) = \GAP(\tilde{\bmx},\tilde{\bmp}) = 0$.
\end{proposition}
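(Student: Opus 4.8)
The plan is to reduce everything to \cref{thm:gap-largerthan0}, with the elementary identities of \cref{prop:price-constraint} and \cref{prop:market-clear:equality} supplying the rest of the bookkeeping. First I would dispose of the three non-negativity claims. The bounds $\VoA(\bmx) = \frac{1}{m}\sum_j |\log\alpha_j| \ge 0$ and $\VoP(\bmp) = |\log\beta| \ge 0$ are immediate from the definitions. For $\GAP(\tilde{\bmx},\tilde{\bmp}) \ge 0$ the point is that the projection was constructed exactly so that $(\tilde{\bmx},\tilde{\bmp})$ lands in the feasible region of \cref{thm:gap-largerthan0}: a one-line substitution gives $\sum_i \tilde{x}_{ij} = \alpha_j \sum_i x_{ij} = V_j$, so $\tilde{\bmx}$ satisfies market clearance, and $\sum_j \tilde{p}_j V_j = \beta \sum_j p_j V_j = \sum_i B_i$, so $\tilde{\bmp}$ satisfies the price constraint; \cref{thm:gap-largerthan0} then yields $\GAP(\tilde{\bmx},\tilde{\bmp}) \ge 0$.

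For the equivalence I would route both directions through the intermediate observation that $\VoA(\bmx) = 0$ forces $\alpha_j = 1$ for every $j$ (hence $\tilde{\bmx} = \bmx$, and moreover $\bmx$ already satisfies market clearance), while $\VoP(\bmp) = 0$ forces $\beta = 1$ (hence $\tilde{\bmp} = \bmp$, and $\bmp$ already satisfies the price constraint). For the ``if'' direction: assuming all three quantities vanish, the above gives $(\tilde{\bmx},\tilde{\bmp}) = (\bmx,\bmp)$, this pair satisfies both constraints, and $\GAP(\bmx,\bmp) = 0$, so \cref{thm:gap-largerthan0} declares $(\bmx,\bmp)$ a market equilibrium. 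For the ``only if'' direction: if $(\bmx,\bmp)$ is a market equilibrium then \cref{prop:market-clear:equality} gives $\sum_i x_{ij} = V_j$ for all $j$, hence $\alpha_j = 1$ and $\VoA(\bmx) = 0$; \cref{prop:price-constraint} gives $\sum_j p_j V_j = \sum_i B_i$, hence $\beta = 1$ and $\VoP(\bmp) = 0$; and then $(\tilde{\bmx},\tilde{\bmp}) = (\bmx,\bmp)$ is an equilibrium satisfying both constraints, so $\GAP(\tilde{\bmx},\tilde{\bmp}) = 0$ by \cref{thm:gap-largerthan0}.

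The only genuine care points, which I expect to be the main (and minor) obstacle, are the well-definedness of the projection and the implicit identification of $V_j$ with the supply $Y_j$. The definitions of $\alpha_j$ and $\beta$ require $\sum_i x_{ij} > 0$ for each $j$ and $\sum_j V_j p_j > 0$; I would record these as standing nondegeneracy hypotheses on the admissible $(\bmx,\bmp)$ over which the measures are defined, noting that they hold automatically at any equilibrium under the assumptions of \cref{prop:market-clear:equality}, where every price is strictly positive and every good is fully sold — so the ``only if'' direction is unconditional. Everything else is direct substitution, and no inequality beyond the one already packaged in \cref{thm:gap-largerthan0} is invoked.
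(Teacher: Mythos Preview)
Your proposal is correct and matches the paper's approach exactly: the paper does not give a standalone proof of \cref{prop:measure-marketequilibrium} but simply remarks that $\tilde{\bmx}=\bmx$ iff $\VoA(\bmx)=0$, $\tilde{\bmp}=\bmp$ iff $\VoP(\bmp)=0$, and then says the statement follows easily from \cref{thm:gap-largerthan0}. Your write-up supplies precisely those details, and your side remark on the $V_j$/$Y_j$ notational slip and the positivity needed for $\alpha_j,\beta$ to be well-defined is a welcome bit of hygiene that the paper glosses over.
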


\cref{prop:measure-marketequilibrium} is a certificate that $\VoA(\bmx), \VoP(\bmp), \GAP(\tilde{\bmx}, \Tilde{\bmp})$ together form a good measure for market equilibrium. Therefore, in our experiments, we compute these measures of solutions and prefer a lower measure without further clarification.








\section{Experiments}
\label{sec:experiments}

In this section, we present empirical experiments that evaluate the effectiveness of MarketFCNet.
Though briefly mentioned in this section, we leave the details of baselines, implementations, hyper-parameters, and experimental environments to
\cref{app:experiments}.


\subsection{Experimental Settings}

In our experiments, all utilities are chosen as CES utilities, which satisfies CCNH conditions and captures a wide utility class including linear utilities, Cobb-Douglas utilities, and Leontief utilities
\citep{CES:varian1992microeconomic, CES-general:arrow1961capital}.
CES utilities have the form of
$u_i(x_i) = \left(\sum_\jinm v^\alpha_{ij} x^{\alpha}_{ij} \right)^{1/\alpha}$
with $\alpha \le 1$. The fixed-price utilities for CES utility, which are necessary to measure the performance, are derived
in \cref{app:derivation}.

In order to evaluate the performance of MarketFCNet, we compare them mainly with a baseline that directly maximizes the objective in EG convex program with gradient ascent algorithm (abbreviated as \emph{EG}), which is widely used in the field of market equilibrium computation. 
Besides, we also consider a momentum version of \emph{EG} algorithm with momentum $\beta=0.9$ (abbreviated as \emph{EG-m}).



We also consider a \naive\ allocation and pricing rule (abbreviated as \emph{\Naive}), which can be regarded as the benchmark of the experiments:
\begin{align}
    x_{ij} =\ 1,\quad p_j = \frac{\sum_\iinn B_i}{m V_j},\quad \text{for all $i,j$}
\end{align}

In \emph{\Naive}, the allocation is evenly distributed such that the market clearance holds. The price for each good is designated such that the price constraints also hold.

In the following experiments, MarketFCNet is abbreviated as \emph{FC}.
Notice that \emph{\Naive} always gives an allocation that satisfies market clearance and price constraints, while \emph{EG}, \emph{EG-m} and \emph{FC} do not.

\subsection{Experiment Results}

\paragraph{Comparing with Baselines}
\begin{table}[t]
    \centering
    \caption{Comparison of MarketFCNet with baselines: $n=1,048,576$ buyers and $m=10$ goods. The GPU time for MarketFCNet represents the training time and testing time, respectively.}
    \begin{tabular}{lrrrr}
    \toprule
    Methods & $\GAP$ & VoA & VoP & GPU Time \\
    \midrule\midrule
    \Naive & 3.65e-1 & 0 & 0 & 3.57e-3 \\
    \midrule
    EG & 2.17e-2 & 2.620e-1 & 7.031e-2 & 197 \\
    \midrule
    EG-m & \textbf{2.49e-4} & 6.01e-2 & 9.77e-2 & 100 \\
    \midrule
    FC & 1.63e-3 & \textbf{1.416e-2} & \textbf{6.750e-3} & \textbf{43.6}; \textbf{9.63e-2} \\
    \bottomrule
    \end{tabular}
    \label{tab:exp:FC:basic}
\end{table}

We choose the number of buyers $n=1,048,576=2^{20}$, number of items $m=10$, CES utility parameter $\alpha=0.5$ and representation with standard normal distribution as the basic experimental environment of MarketFCNet;
We consider $\GAP(\tilde{\bmx},\tilde{\bmp}), \VoA(\bmx), \VoP(\bmp)$ and the running time of algorithms as the measures.
Without special specifications, these parameters are default settings among other experiments.
Results are presented in \cref{tab:exp:FC:basic}.
From these results, we can see that the approximations of MarketFCNet are competitive with \emph{EG} and \emph{EG-m} and far better than \Naive, which means that the solution of MarketFCNet is very close to market equilibrium.
MarketFCNet also achieves a much lower running time compared with \emph{EG} and \emph{EG-m}, which indicates that these methods are more suitable for large-scale market equilibrium computation.
In the following experiments, $\VoA$ and $\VoP$ measures are omitted and we only report $\GAP$ and running time.

\paragraph{Experiments in different parameters settings}
\begin{figure*}[t]
    \centering
    
    \begin{subfigure}{0.24\textwidth}
    \includegraphics[width=\textwidth]{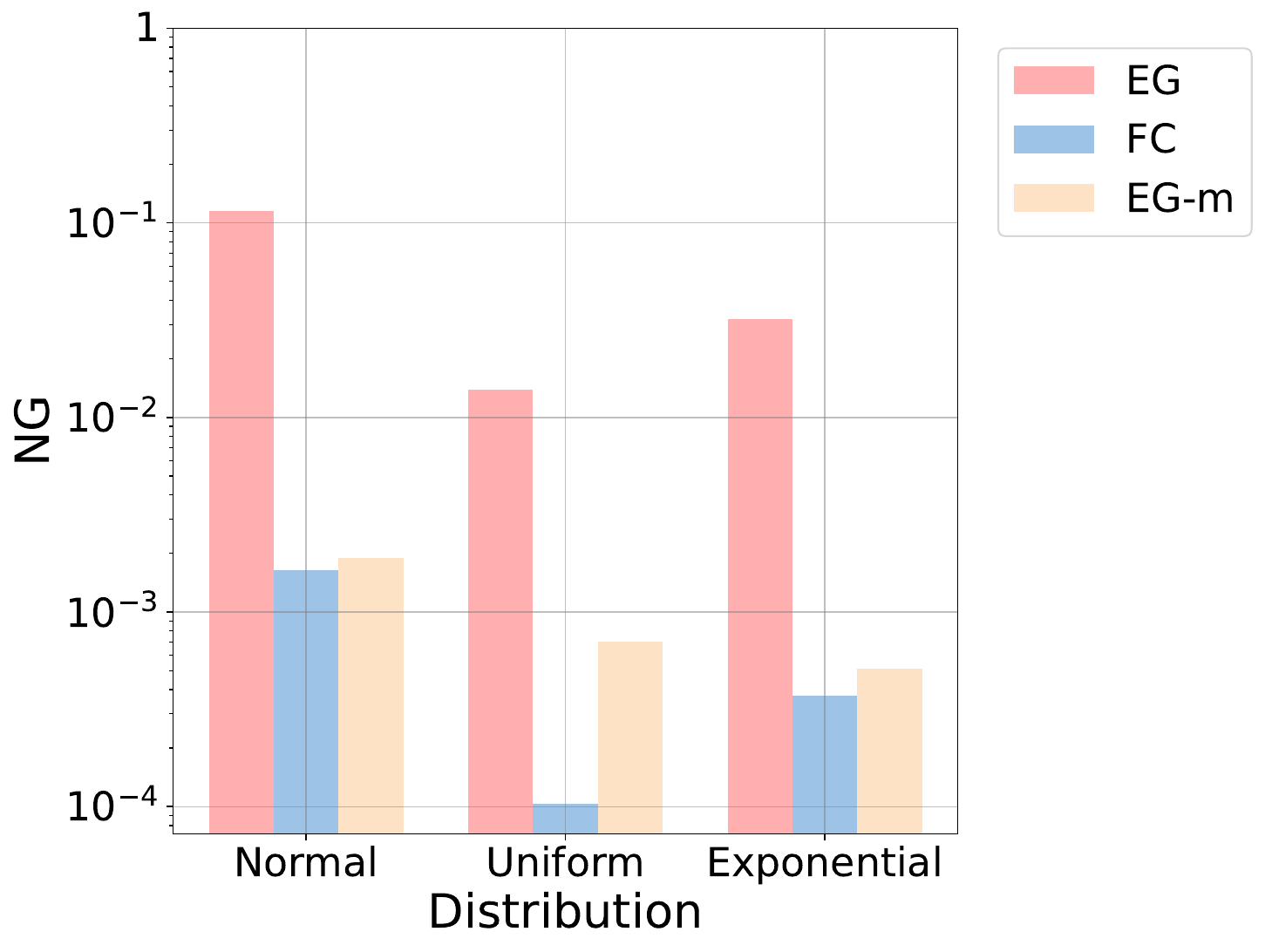}
    \caption{Nash Gap on different distributions.}
    \label{fig:exp:FC:dist-GAP}
    \end{subfigure}
    \hfill
    \begin{subfigure}{0.24\textwidth}
    \includegraphics[width=\textwidth]{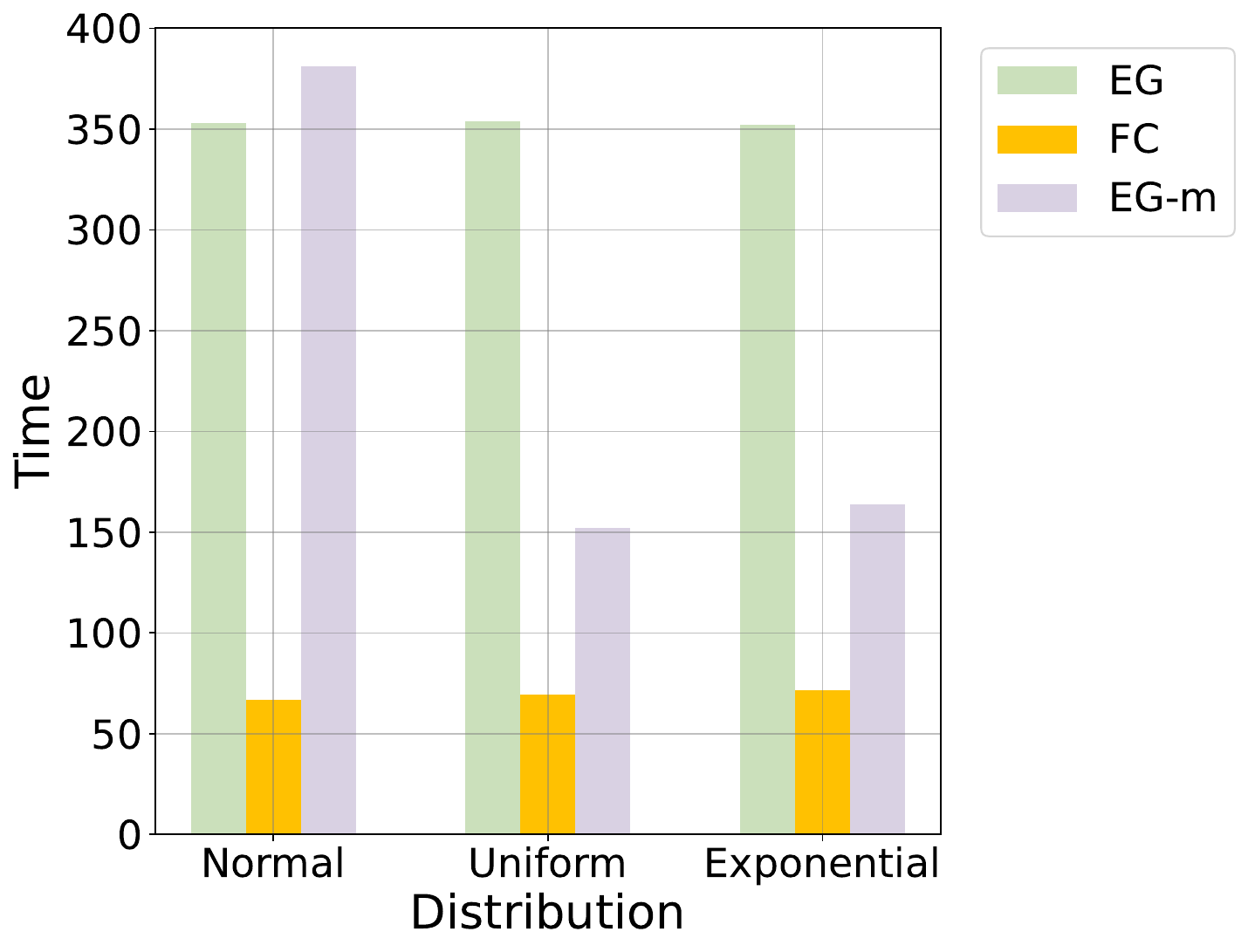}
    \caption{GPU time on different distributions.}
    \label{fig:exp:FC:dist-time}
    \end{subfigure}
    \hfill
    \begin{subfigure}{0.24\textwidth}
    \includegraphics[width=\textwidth]{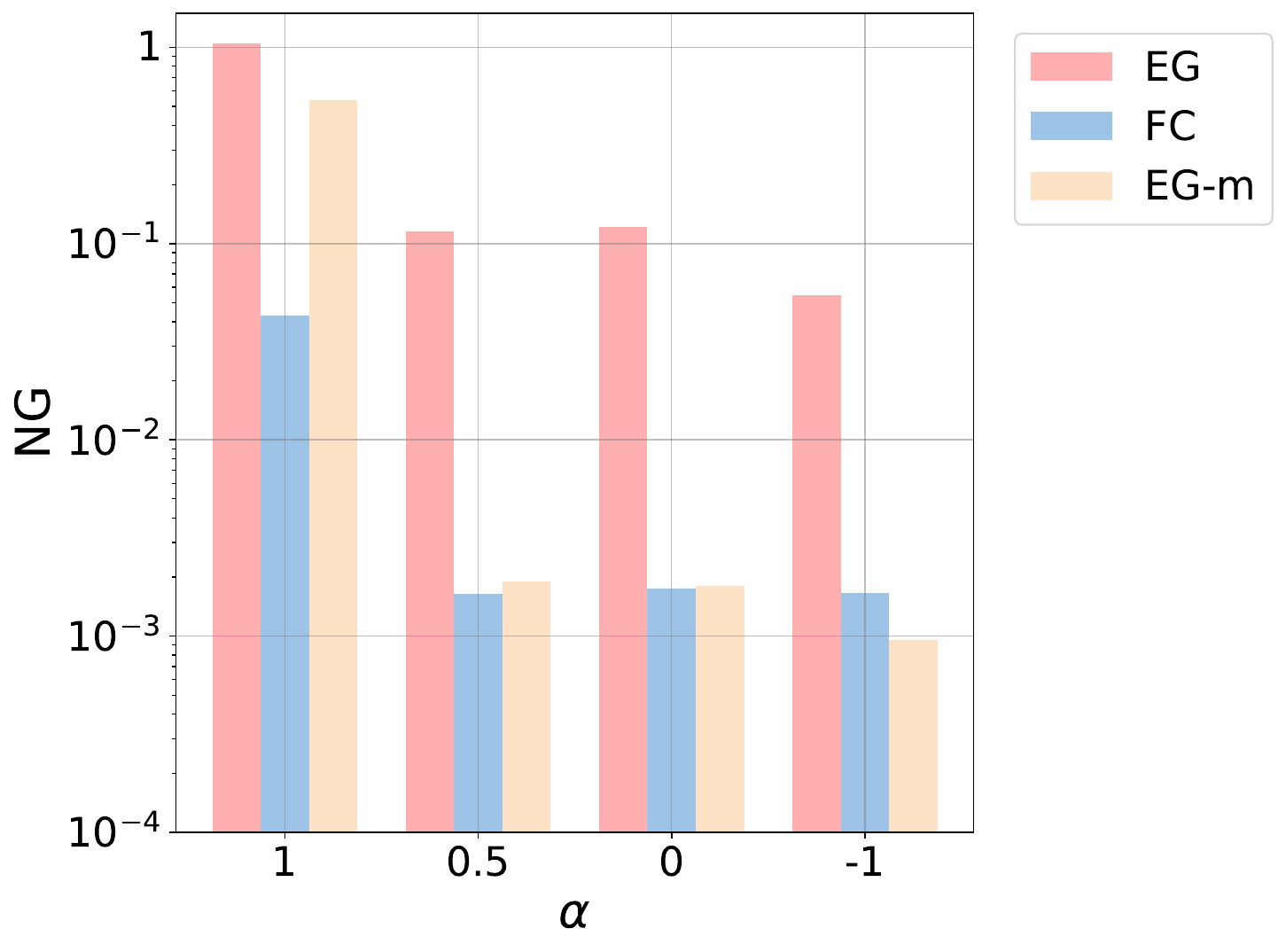}
    \caption{Nash Gap on different $\alpha$.}
    \label{fig:exp:FC:alpha-GAP}
    \end{subfigure}
    \hfill
    \begin{subfigure}{0.24\textwidth}
    \includegraphics[width=\textwidth]{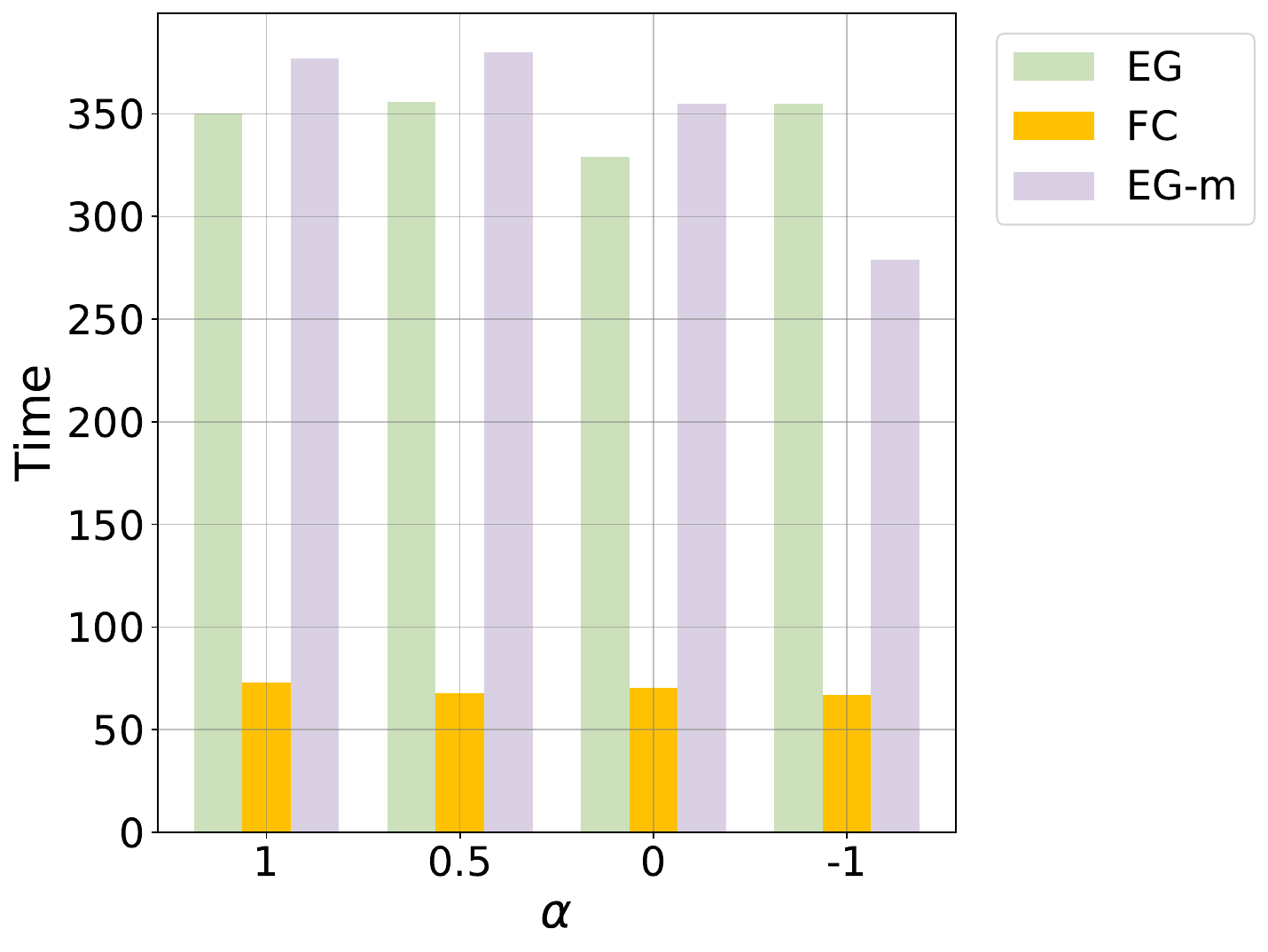}
    \caption{GPU time on different $\alpha$.}
    \label{fig:exp:FC:alpha-time}
    \end{subfigure}
    
    \caption{The Nash Gap and GPU running time for different approaches: MarketFCNet, EG, and EG-m. Different colors represent different approaches. Market size is fixed with $n=4, 194, 304$ buyers and $m=10$ goods.}
    \label{fig:exp:FC:dist-and-alpha}
\end{figure*}

In this experiment, the market scale is chosen as $n=4, 194, 304$ and $m=10$.
We consider experiments on different distributions of representation, including normal distribution, uniform distribution, and exponential distribution. See (a) and (b) in \cref{fig:exp:FC:dist-and-alpha} for results. 
We also consider different $\alpha$ in our experimental settings.
Specifically, our settings consist of: 1) $\alpha=1$, the utility functions are linear; 
2) $\alpha=0.5$, where goods are substitutes; 
3) $\alpha=0$, where goods are neither substitutes nor complements; 
4) $\alpha=-1$, where goods are complements.
More detailed results are shown in (c) and (d) \cref{fig:exp:FC:dist-and-alpha}. The performance of MarketFCNet is robust in both settings.
From experimental results we see that MarketFCNet performs well on different settings.



\paragraph{Different market scale for MarketFCNet}
\begin{figure*}[t]
    \centering
    \begin{subfigure}{0.3\textwidth}
    \includegraphics[width=\textwidth]{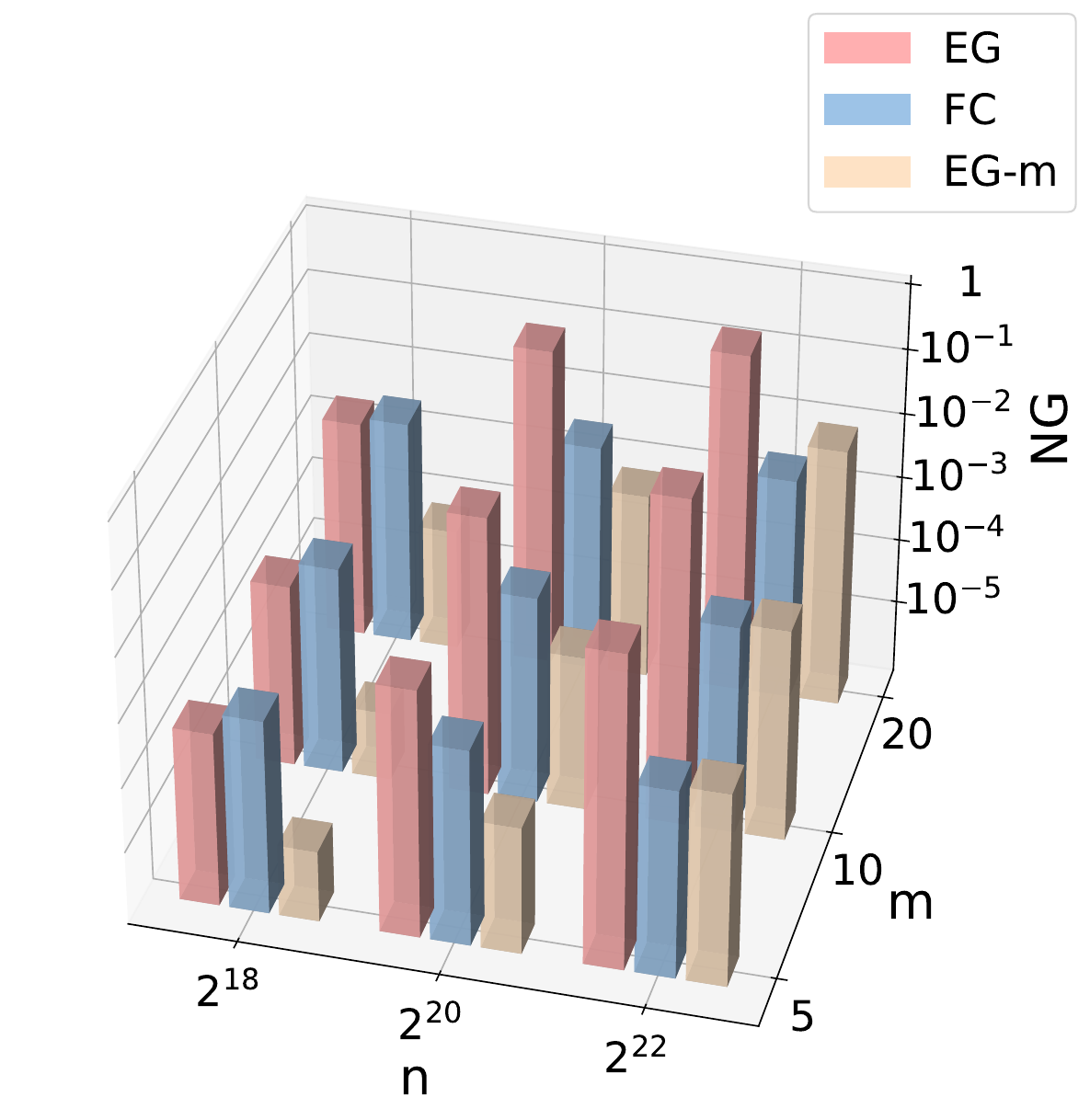}
    \label{fig:exp:FC:scale-GAP}
    \end{subfigure}
    \hspace{0.2\textwidth}
    \begin{subfigure}{0.3\textwidth}
    \includegraphics[width=\textwidth]{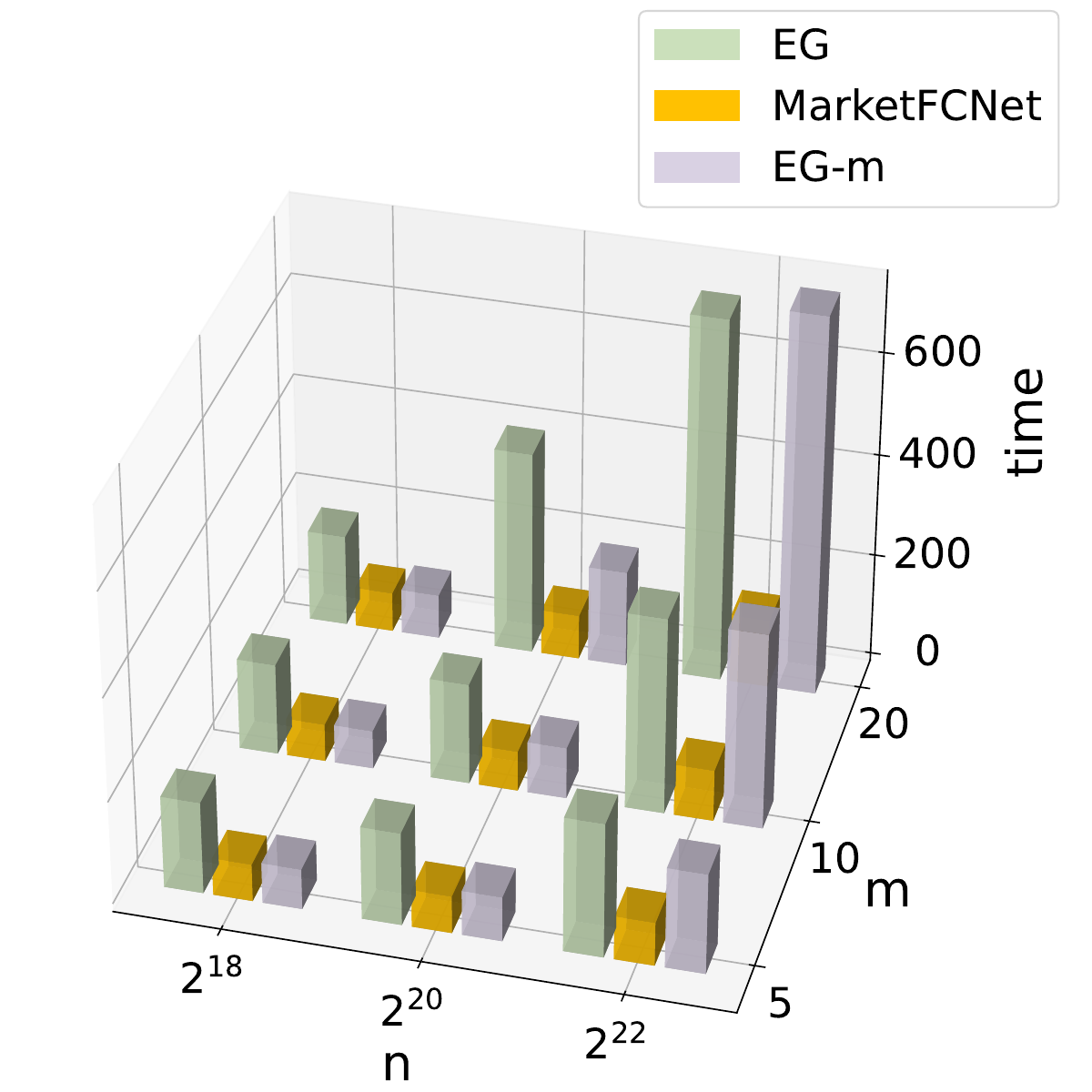}
    \label{fig:exp:FC:scale-time}
    \end{subfigure}
    \caption{The Nash Gap (left) and GPU time (right) for MarketFCNet, EG, and EG-m. Market size varies from $n=2^{18}, 2^{20}, 2^{22}$ buyers and $m=5, 10, 20$ goods.}
    \label{fig:exp:FC:scale}
\end{figure*}
\begin{figure*}[t]
    \centering

    \begin{subfigure}{0.3\textwidth}
    \includegraphics[width=\textwidth]{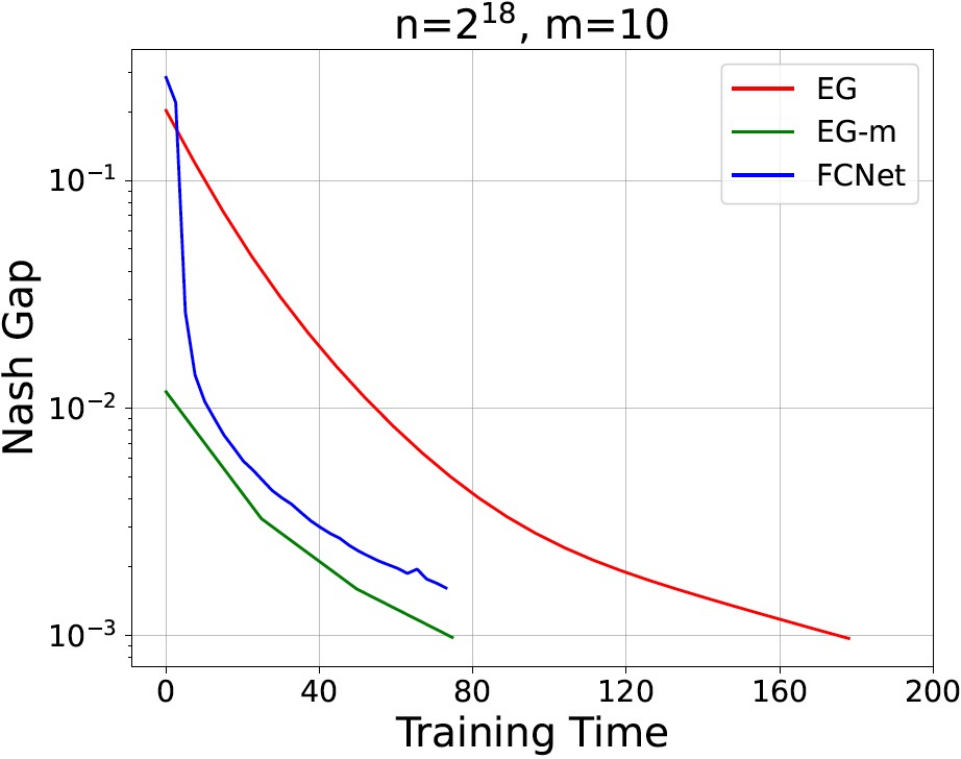}
    \caption{The curve of Nash Gap after each epoch on market with $n=2^{18}$ buyers and $m = 10$ goods.}
    \label{fig:exp:FC:loss-1}
    \end{subfigure}
    \hfill
    \begin{subfigure}{0.3\textwidth}
    \includegraphics[width=\textwidth]{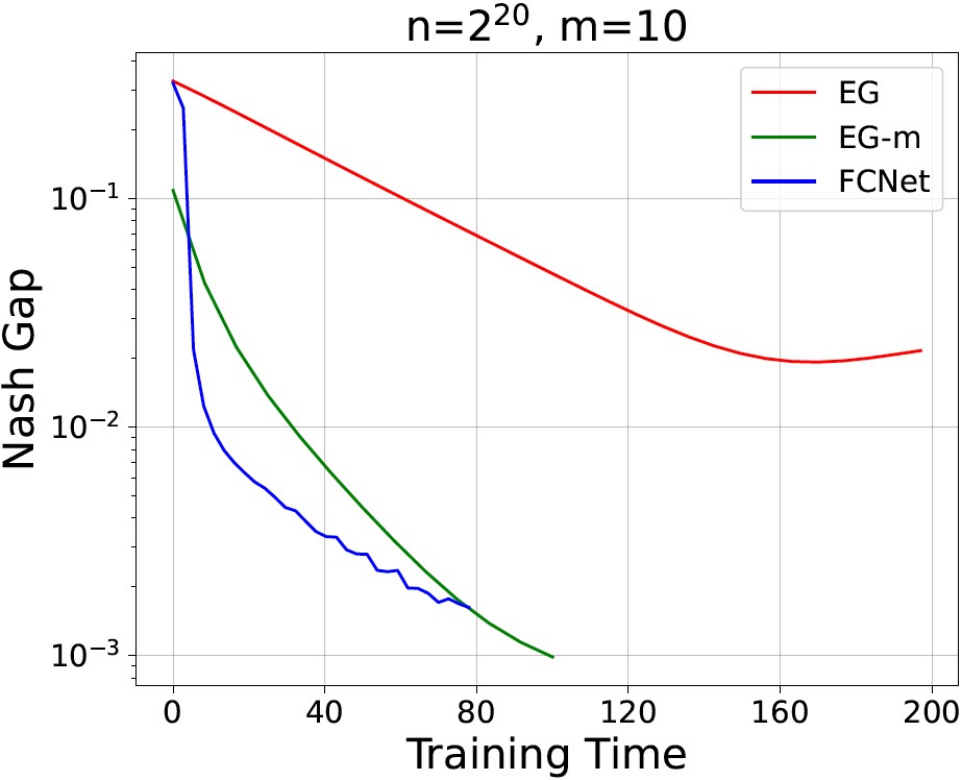}
    \caption{The curve of Nash Gap after each epoch on market with $n=2^{20}$ buyers and $m = 10$ goods.}
    \label{fig:exp:FC:loss-2}
    \end{subfigure}
    \hfill
    \begin{subfigure}{0.3\textwidth}
    \includegraphics[width=\textwidth]{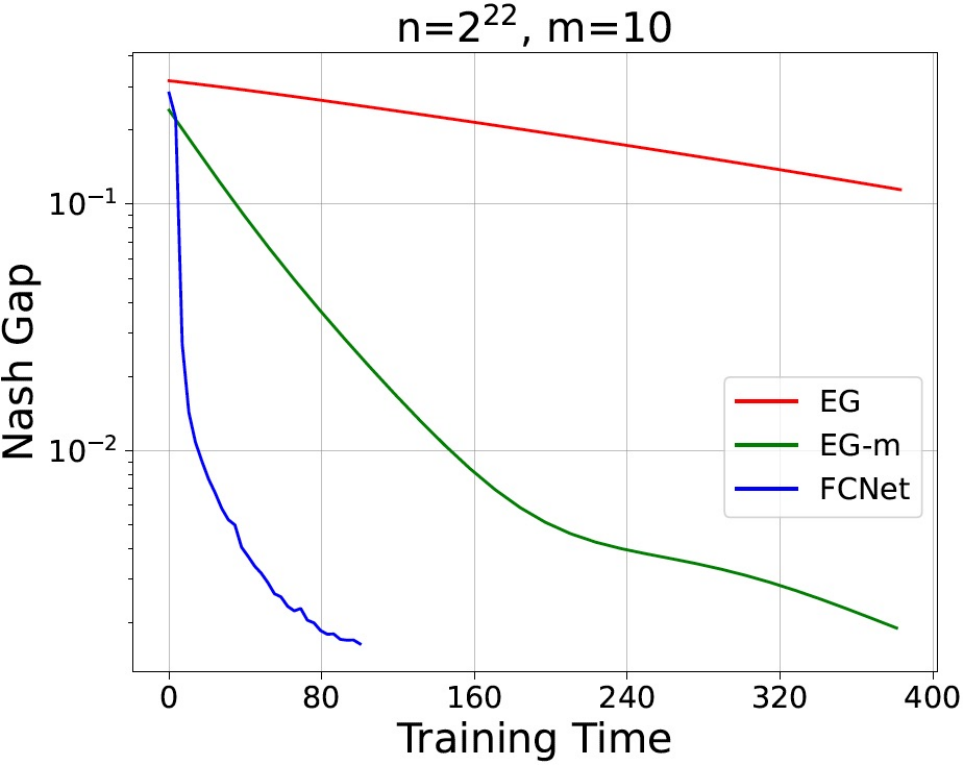}
    \caption{The curve of Nash Gap after each epoch on market with $n=2^{22}$ buyers and $m = 10$ goods.}
    \label{fig:exp:FC:loss-3}
    \end{subfigure}
    
    \caption{The curve of Nash Gap for different approaches: MarketFCNet, EG, and EG-m. Nash Gap is computed after each training epoch. Buyer size varies from $n= 2^{18},2^{20},2^{22}$, and the number of goods is fixed with $m=10$.}
    \label{fig:exp:FC:loss}
\end{figure*}

A natural question is how market size (here $n$ and $m$) will have an impact on the efficiency of MarketFCNet and traditional baselines.
We take $m=5, 10, 20$ and $n = 2^{18} = 262,114, 2^{20} = 1,048,576, 2^{22} = 4,194,304$ as the experimental settings.
For each combination of $n$ and $m$, we trained MarketFCNet and compared it with EG and EG-m. The Nash Gap and training time are provided in \cref{fig:exp:FC:scale}.
We also compute the Nash Gap after each training epoch of traditional methods and MarketFCNet, and provide the loss curve in \cref{fig:exp:FC:loss}.
Experimental results show that, as the market size varies, MarketFCNet has almost the same Nash Gap and running time. 
However, as the market size increases, both EG and EG-m have larger Nash Gaps and longer running times, demonstrating the scalability of MarketFCNet to solve large-scale contextual market equilibrium compared with traditional methods.
\section{Conclusions and Future Work}
\label{sec:conclusions}

This paper initiates the problem of large-scale contextual market equilibrium computation from a deep learning perspective. 
We believe that our approach will pioneer a promising direction for deep learning-based market equilibrium computation.

For future works, it would be promising to extend these methods to the case when only the number of goods is large , or both the numbers of goods and buyers are large, which stays a blank throughout our works. \footnote{The most ideal method is that the method shall work on the setting when the goods, or both the buyers and goods, potentially obey some certain distribution.}
Since many existing works proposed dynamics for online market equilibrium computation, it's also promising to extend our approaches to the online setting. 
Besides, both existing works and ours consider sure budgets and values for buyers, and it would be interesting to extend the fisher market and equilibrium concept when the budgets or values of buyers are stochastic or uncertain.




\newpage

\begin{credits}
\subsubsection{\ackname} 
This work is supported by Wuhan East Lake High-Tech Development Zone (also known as the Optics Valley of China, or OVC) National Comprehensive Experimental Base for Governance of Intelligent Society, and the National Natural Science Foundation of China (NSFC) under grant number [62172012]. 
The authors would like to thank Ningyuan Li, Yurong Chen, Shicheng Li, and many anonymous referees for their suggestions and help with this work.

\end{credits}

\nocite{*}

%
%
%
\printbibliography

\newpage
\onecolumn
\appendix




\section{Derivation of Fixed-price Utility for CES Utility Functions}
\label{app:derivation}

In this section we show the explicit expressions of Fixed-price Utility for CES utility functions.

We first consider the case $\alpha \ne 0,1,-\infty$.
The optimization problem for consumer $i$ is:
\begin{align}
\label{eq:opt:consumer}
    \max_{x_{ij}, \jinm}\quad & u_i(\bmx_i) = \left[ \sum_\jinm v^\alpha_{ij} x^\alpha_{ij} \right]^{1/\alpha}
    \\
    s.t.\quad& \sum_\jinm x_{ij} p_j = B_i \tag{Budget Constraint}\label{eq:budget-constraint}
    \\
    & x_{ij} \ge 0 \label{eq:spare}
\end{align}
Not hard to verify that in an optimal solution with \cref{eq:budget-constraint}, \cref{eq:spare} always holds, therefore we omit this constraint in our derivation.

We write the Lagrangian $L(\bmx_i,\lambda)$
\begin{equation}
\label{eq:opt:consumer:lag}
\begin{aligned}
    L(\bmx_i,\lambda) = u_i(\bmx_i) + \lambda (B_i - \sum_\jinm x_{ij} p_j)
\end{aligned}
\end{equation}

By $\frac{\partial L}{\partial x_{ij}} = 0$, we have
\begin{equation}
\begin{aligned}
    \frac{\partial u_i}{\partial x_{ij}^*}(\bmx_i) = \lambda p_j
\end{aligned}
\end{equation}

We derive that
\begin{align}
    \frac{\partial u_i}{\partial x_{ij}}(\bmx_i) =& \frac{1}{\alpha} \left[ \sum_\jinm v^\alpha_{ij} x^\alpha_{ij} \right]^{1/\alpha - 1} \cdot \alpha v^\alpha_{ij} x^{\alpha-1}_{ij}
    \\
    v^{\alpha}_{ij} x^{\alpha-1}_{ij} =& c p_j \qquad \cdots \text{let $c = \lambda \cdot \left[ \sum_\jinm v^\alpha_{ij} x^\alpha_{ij} \right]^{1/\alpha - 1}$}
    \\
    x_{ij}^* =& \frac{v_{ij}^{\frac{\alpha}{1-\alpha}}}{c^{\frac{1}{1-\alpha}}\cdot p_j^{\frac{1}{1-\alpha}}}
    \label{eq:allocation-expression}
\end{align}

Taking \eqref{eq:allocation-expression} into \eqref{eq:budget-constraint}, we get
\begin{align}
    B_i =& \sum_\jinm \frac{v_{ij}^{\frac{\alpha}{1-\alpha}}}{c^{\frac{1}{1-\alpha}}}\cdot p_j^{-\frac{\alpha}{1-\alpha}}
    \\
    c^{\frac{1}{1-\alpha}} =& \frac{1}{B_i} \sum_\jinm \left(\frac{v_{ij}}{p_j}\right)^{\frac{\alpha}{1-\alpha}}\label{eq:derivation:c}
\end{align}

Taking \cref{eq:derivation:c} into \cref{eq:allocation-expression}, we get
\begin{equation}
\label{eq:derivation:alloc}
    x_{ij}^* = \frac{v_{ij}^{\frac{\alpha}{1-\alpha}}}{p_j^{\frac{1}{1-\alpha}}} \cdot \frac{B_i}{c_0}
\end{equation}
where $c_0 = \sum_\jinm \left( \frac{v_{ij}}{p_j} \right)^{\frac{\alpha}{1-\alpha}}$

Taking \cref{eq:derivation:alloc} into \cref{eq:opt:consumer}, we finally have
\begin{equation}
\begin{aligned}
    u_i(\bmx_i^*) =& \left[ v_{ij}^\alpha x_{ij}^\alpha \right]^{\frac{1}{\alpha}}
    \\
    =& \left[ \sum_\jinm v_{ij}^\alpha \frac{v_{ij}^{\frac{\alpha^2}{1-\alpha}}}{p_j^{\frac{\alpha}{1-\alpha}}} c_0^\alpha \right]
    \\
    =& \left[ \sum_\jinm \left( \frac{v_{ij}}{p_j} \right)^{\frac{\alpha}{1-\alpha}} c_0^\alpha \right]
    \\
    =& B_i c_0^{\frac{1-\alpha}{\alpha}}
    \\
    \log \Tilde{u}_i(\bmp) =& \log u_i(\bmx_i^*) = \log B_i + \frac{1-\alpha}{\alpha} \log c_0
\end{aligned}
\end{equation}

For $\alpha=1$, by simple arguments we know that consumer will only buy the good that with largest value-per-cost, \ie, $v_{ij}/p_j$. Therefore, we have
\begin{align}
    \log \Tilde{u}_i(\bmp) = \log B_i + \log \max_j \frac{v_{ij}}{p_j}
\end{align}

For $\alpha=0$, we have $\log u_i(\bmx_i) = \frac{1}{v_t} \sum_\jinm v_{ij} \log x_{ij}$ where $v_t = \sum_\jinm v_{ij}$.

Similarly, we have
\begin{align}
    c p_j =& \frac{\partial \log u_i}{\partial x_{ij}} = \frac{v_{ij}}{x_{ij}}
    \\
    x_{ij}^* =& \frac{v_{ij}}{c p_j}
\end{align}

By solving budget constraints we have $c = \frac{v_t}{B_i}$, and therefore, $x_{ij}^* = \frac{v_{ij}B_i}{p_j v_t}$ and 
\begin{align}
    \log u_i(\bmx_i^*) =& \frac{1}{v_t}\sum_\jinm (v_{ij} \log \frac{v_{ij} B_i}{p_j v_t})
    \\
    =& \log B_i + \sum_\jinm \frac{v_{ij}}{v_t}\log \frac{v_{ij}}{p_j v_t}
\end{align}

For $\alpha = -\infty$, we can easily know that $v_{ij} x_{ij}^* \equiv c$ for some $c$. By solving budget constraint we have
\begin{align}
    & \sum_\jinm \frac{c p_j}{v_{ij}} = B_i
    \\
    & c = B_i \left(\sum_\jinm \frac{p_j}{v_{ij}}\right)^{-1}
    \\
    & \log \tilde{u}_i(\bmp) = \log c = \log B_i - \log \sum_\jinm \frac{p_j}{v_{ij}}
\end{align}

Above all, the log Fixed-price Utility for CES functions is
\begin{equation}
\log \tilde{u}_i(\bmp) = \begin{cases}
    \log B_i + \max_j \log \frac{v_{ij}}{p_j} \quad \text{ for $\alpha=1$}
    \\
    \log B_i + \sum_\jinm \frac{v_{ij}}{v_t}\log \frac{v_{ij}}{p_j v_t} \quad \text{ for $\alpha=0$}
    \\
    \log B_i - \log \sum_\jinm \frac{p_j}{v_{ij}} \quad \text{ for $\alpha=-\infty$}
    \\
    \log B_i + \frac{1-\alpha}{\alpha} \log c_0 \quad \text{ others}
\end{cases}
\end{equation}

\section{Omitted Proofs}
\label{app:proofs}

\subsection{Proof of \cref{prop:market-clear:equality}}

\begin{proof}
\label{prf:prop:market-clear:equality}

We consider Lagrangian multipliers $\bmp$ and use the KKT condition.
The Lagrangian becomes
\begin{align}
    L(\bmp,\bmx)=\sum_i B_i\log u_i(\bmx_i)-\sum_j p_j (\sum_i x_{ij} - Y_j)
\end{align}
and the partial derivative of $x_{ij}$ is
\begin{align}
    \frac{\partial L(\bmp,\bmx_i)}{\partial x_{ij}}=\frac{B_i}{u_i(\bmx_i)} \frac{\partial u_i}{\partial x_{ij}} - p_j
\end{align}
By complementary slackness of $x_{ij}\ge 0$, we have
\begin{align}
    \frac{B_i}{u_i(\bmx_i)} \frac{\partial u_i}{\partial x_{ij}}\le p_j \text{ for all $i$}
\end{align}
By theorem 3.1, we know that if $(\bmx,\bmp)$ is a market equilibrium, we must have $u_i(\bmx_i)>0$ for all $i$, and by condition in \cref{prop:market-clear:equality}, we can always select buyer $i$ such that $\frac{\partial u_i}{\partial x_{ij}}>0$. Therefore, we have $p_j > 0$.

As a consequence, $p_j > 0$ indicates that $\sum_j x_{ij} = V_j$ by market clearance condition.

\end{proof}

\subsection{Proof of \cref{prop:unbiased}}

\begin{proof}[Proof of \cref{prop:unbiased}]
\label{prf:prop:unbiased}

We only need to show that each of the three terms forms an unbiased estimator of the corresponding term in $\calL_\rho(\bmx_\theta;\bmll)$, then we complete the proof by the linearity of expectation.

For the first term, we have
\begin{equation}
\label{eq:unbiased-1}
\begin{aligned}
    & \bbE_{\{b^k\}_{1\le k\le 2M} \iidd \calU(\calB)}[\widehat{\OBJ}(x_\theta;b^1,...,b^{2M})] 
    \\
    =& \bbE_{\{b^k\}_{1\le k\le 2M} \iidd \calU(\calB)}\left[ \frac{1}{M} \sum_\iinM \left[ B(b^i) \log u(b^i;\bmx_\theta(b^i,\bmg)) \right] \right]
    \\
    =& \frac{1}{M} \sum_\iinM \bbE_{\{b^k\}_{1\le k\le 2M} \iidd \calU(\calB)}\left[ B(b^i) \log u(b^i;\bmx_\theta(b^i,\bmg)) \right]
    \\
    =& \frac{1}{M} \sum_\iinM \bbE_{b_i \sim \calU(\calB)}\left[ B(b^i) \log u(b^i;\bmx_\theta(b^i,\bmg)) \right]
    \\
    =& \frac{1}{M} \sum_\iinM \bbE_{b \sim \calU(\calB)}\left[ B(b) \log u(b;\bmx_\theta(b,\bmg)) \right]
    \\
    =& \bbE_{b \sim \calU(\calB)}\left[ B(b) \log u(b;\bmx_\theta(b,\bmg)) \right]
    \\
    =& \OBJ(x_\theta),
\end{aligned}
\end{equation}
where the third equation holds because $B(b^i) \log u(b^i;\bmx_\theta(b^i,\bmg)) $ is irrelevant with $b_k$ for $k\neq i$, the fourth equation holds because we replace $b^i$ with $b$.

The second term holds similarly and is thus omitted. For the last term, since $b^i$ and $b^{i+M}$ are independent, we have $x_\theta(b^i,g_j) - 1$ and $x_\theta(b^{i+M},g_j) - 1$ are independent, since they are functions of $b^i$ and $b^{i+M}$, respectively. Therefore, we have 
\begin{align*}
    & \bbE_{\{b^i, b^{i+M}\} \iidd \calU(\calB)}\left[\left(x_\theta(b^i,g_j) - 1\right)\left(x_\theta(b^{i+M},g_j) - 1\right)\right]
    \\
    =& \bbE_{b^i \sim \calU(\calB)}\left[x_\theta(b^i,g_j) - 1\right]
    \cdot \bbE_{b^{i+M} \sim \calU(\calB)}\left[x_\theta(b^{i+M},g_j) - 1\right]
    \\
    =& \left(\bbE_{b \sim \calU(\calB)}\left[x_\theta(b,g_j)\right] - 1\right)
    \cdot \left(\bbE_{b \sim \calU(\calB)}\left[x_\theta(b,g_j)\right] - 1\right)
    \\
    =& \left(\bbE_{b \sim \calU(\calB)}\left[x_\theta(b,g_j)\right] - 1\right)^2,
\end{align*}
and for the whole term, 

\begin{equation}
\label{eq:unbiased-3}
\begin{aligned}
    & \bbE_{\{b^k\}_{1\le k\le 2M} \iidd \calU(\calB)} \left[ \frac{\rho}{2}\sum_\jinm \frac{1}{M}\sum_\iinM \left(x_\theta(b^i,g_j) - 1\right)\left(x_\theta(b^{i+M},g_j) - 1\right) \right]
    \\
    =& \frac{\rho}{2M}\sum_\jinm \sum_\iinM
    \bbE_{\{b^k\}_{1\le k\le 2M} \iidd \calU(\calB)} \left[ \left(x_\theta(b^i,g_j) - 1\right)\left(x_\theta(b^{i+M},g_j) - 1\right) \right]
    \\
    =& \frac{\rho}{2M}\sum_\jinm \sum_\iinM
    \bbE_{\{b^i, b^{i+M}\} \iidd \calU(\calB)} \left[ \left(x_\theta(b^i,g_j) - 1\right)\left(x_\theta(b^{i+M},g_j) - 1\right) \right]
    \\
    =& \frac{\rho}{2M}\sum_\jinm \sum_\iinM
    \left(\bbE_{b \sim \calU(\calB)}\left[x_\theta(b,g_j)\right] - 1\right)^2
    \\
    =& \frac{\rho}{2}\sum_\jinm 
    \left(\bbE_{b \sim \calU(\calB)}\left[x_\theta(b,g_j)\right] - 1\right)^2.
\end{aligned}
\end{equation}

Combining above, we obtain that,
\begin{align*}
    \bbE_{\{b^k\}_{1\le k\le 2M} \iidd \calU(\calB)} [\hat{\calL}_\rho(x_\theta;\bmll;b^1,...,b^{2M})] = \calL_\rho(x_\theta;\bmll),
\end{align*}
which states that $\hat{\calL}_\rho(x_\theta;\bmll;b^1,...,b^{2M})$ is an unbiased estimator of $\calL_\rho(x_\theta;\bmll)$.
    
\end{proof}

\subsection{Proof of \cref{prop:price-constraint}}

Consider the market equilibrium condition $\langle \bmp^*, \bmx^*_i \rangle = B_i$, we have $\sum_j p_jx_{ij}=B_i$.
sum over this expression, we have $ \sum_i \sum_j p_jx_{ij}=\sum_i B_i$.
Then, $\sum_j p_j \sum_i x_{ij}=\sum_i B_i$.
Notice that we have $\sum_{i=1}^n x_{ij} = Y_j$ in market equilibrium, so $\sum_j p_jY_j=\sum_i B_i$, that completes the proof.

\subsection{Proof of \cref{thm:gap-largerthan0}}

\begin{proof}[Proof of \cref{thm:gap-largerthan0}]
\label{prf:thm:gap-largerthan0}
Denote $(\bmx,\bmp)$ as the market equilibrium, $\bmp$ as the price for goods and $\bmx^*_i(\bmp)$ as the optimal consumption set of buyer $i$ when the price is $\bmp$.

We have following equation:
\begin{align}
    \sum_j x_{ij} p_j =& B_i
    \\
    \bmx_i \in& \bmx^*_i(\bmp)
    \\
    \sum_\iinn x_{ij} =& Y_j
    \\
    u_i(\bmp) =& u_i(\bmx_i),\ \forall \bmp\in\bbR^m_+,\ \forall \bmx_i \in \bmx^*_i(\bmp)
\end{align}

From \cref{prop:price-constraint} we know $\sum_\iinn B_i = \sum_\jinm Y_j p_j$.

Let $\bmp'$ be some price for items such that $\sum_\jinm Y_j \bmp'_j = \sum_\iinn B_i$.
Let $\bmx'_i \in \bmx^*_i(\bmp')$ and $B'_i = \langle \bmp', \bmx_i \rangle$. We know that
\begin{align}
    \sum_\iinn B'_i = \langle \bmp', \sum_\iinn \bmx_i \rangle = \langle \bmp', \bmY\rangle = \sum_\iinn B_i 
\end{align}

For consumer $i$, $\bmx_i$ costs $B'_i$ at price $\bmp'$, thus $\frac{B_i}{B'_i}\bmx_i$ costs $B_i$ at price $\bmp'$.
Besides, $\bmx'_i$ also costs $B_i$ for price $\bmp'$, and $\bmx'$ is the optimal consumption for buyer $i$.
Then we have
\begin{align}
    u_i(\bmp') = u_i(\bmx'_i) \ge u_i(\frac{B_i}{B'_i} \bmx_i) = \frac{B_i}{B'_i} u_i(\bmx_i)
\end{align}
where the last equation is from the homogeneity(with degree 1) of utility function.

Taking logarithm and weighted sum with $B_i$, we have
\begin{align}
    \sum_\iinn B_i \log u_i(\bmp') \ge \sum_\iinn B_i \log \frac{B_i}{B'_i} + \sum_\iinn B_i \log u_i(\bmx_i)
\end{align}

Take $B_\total = \sum_\iinn B_i$, the first term in RHS becomes 
\begin{align}
    & \sum_\iinn B_i \log \frac{B_i}{B'_i}
    \\
    =& B_\total \sum_\iinn \left( \frac{B_i}{B_\total} \log \frac{B_i/B_\total}{B'_i/B_\total} \right)
    \\
    =& B_\total \cdot \text{KL}(\frac{\bm{B}}{B_\total}||\frac{\bm{B'}}{B_\total})
    \\
    \ge 0
\end{align}

Therefore,
\begin{align}
\label{prf:gap:1}
    \sum_\iinn B_i \log u_i(\bmp') \ge \sum_\iinn B_i \log u_i(\bmx_i)
\end{align}

For $\bmx'$ that satisfies market clearance, by optimality of EG program\eqref{eq:eisenberg-gale-primal}, we have
\begin{align}
\label{prf:gap:2}
    \sum_\iinn B_i \log u_i(\bmx_i) \ge \sum_\iinn B_i \log u_i(\bmx'_i)
\end{align}

\cref{prf:gap:1} and \cref{prf:gap:2} together complete the proof of the first part.

If $(\bmx,\bmp)$ constitutes a market equilibrium, it's obvious that $\LFW(\bmp)$ and $\LNW(\bmx)$ are identical, therefore $\GAP(\bmx,\bmp)=0$.

On the other hand, if $(\bmx,\bmp)$ is not a market equilibrium, but $\GAP(\bmx,\bmp)=0$, it means that the KL convergence term must equal to 0, and $B_i = B'_i$ for all $i$, which means that $\bmx_i$ costs buyer $i$ with money $B_i$ and $\bmx_i$ are in the consumption set of buyer $i$. Since $(\bmx,\bmp)$ is not a market equilibrium, there is at least one buyer that can choose a better allocation $\bmx'_i$ to improve her utility, therefore improve $\LFW(\bmp)$, and it cannot be the case that $\LFW(\bmp) = \LNW(\bmx)$, which makes a contradiction.

\end{proof}

\subsection{Proof of \cref{prop:gap-epsilon}}

We leave the formal presentation of \cref{prop:gap-epsilon} to \cref{cor:1,cor:2}
and their proofs below.

\begin{lemma}
\label{lem:gap-epsilon:1}
Assume that $u_i(\bmx_i)$ is twice differentiable and denote $H(\bmx_i)$ as the Hessian matrix of $u_i(\bmx_i)$. If following hold:
\begin{itemize}
    \item $H(\bmx_i^*)$ has rank $m-1$
    \item $||\bmx_i - \bmx_i^*|| = \varepsilon$ for some $i$
    \item $\bmx_i^*>\zeros$
\end{itemize}
then we have $\OPT - \LNW(\bmx) = \Omega(\varepsilon^2)$.
\end{lemma}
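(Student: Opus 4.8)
The plan is to control $\OPT-\LNW(\bmx)$ by a second-order expansion of $\LNW$ along the segment from $\bmx^*$ to $\bmx$: the first-order term is non-positive by optimality of $\bmx^*$, while the second-order term is bounded above by $-c\,\varepsilon^2$ because the Hessian of the per-buyer term $B_i\log u_i$ at $\bmx_i^*$ is \emph{strictly} negative definite under the stated hypotheses. Here $\OPT=\max_{\bmx}\LNW(\bmx)=\LNW(\bmx^*)$ over allocations satisfying market clearance.

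\textbf{Step 1 (Hessian of one buyer's term).} For $f_i(\bmx_i):=B_i\log u_i(\bmx_i)$, a direct computation gives
\[
\nabla^2 f_i(\bmx_i^*)=\frac{B_i}{u_i(\bmx_i^*)}\,H(\bmx_i^*)-\frac{B_i}{u_i(\bmx_i^*)^2}\,\nabla u_i(\bmx_i^*)\,\nabla u_i(\bmx_i^*)^{\top}.
\]
Homogeneity of degree $1$ (Euler's identity, then differentiated once more) gives $H(\bmx_i^*)\,\bmx_i^*=\zeros$ and $\nabla u_i(\bmx_i^*)^{\top}\bmx_i^*=u_i(\bmx_i^*)$; concavity of $u_i$ gives $H(\bmx_i^*)\preceq 0$; and \cref{thm:eg-gives-me-for-certain-ui} gives $u_i(\bmx_i^*)>0$. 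Since $\rank H(\bmx_i^*)=m-1$, its kernel is exactly $\mathrm{span}\{\bmx_i^*\}$. I would then show $\nabla^2 f_i(\bmx_i^*)\prec 0$: for $v\neq\zeros$,
\[
v^{\top}\nabla^2 f_i(\bmx_i^*)\,v=\frac{B_i}{u_i(\bmx_i^*)}\,v^{\top}H(\bmx_i^*)v-\frac{B_i}{u_i(\bmx_i^*)^2}\big(\nabla u_i(\bmx_i^*)^{\top}v\big)^2\le 0,
\]
with equality forcing $v^{\top}H(\bmx_i^*)v=0$, hence $v=t\,\bmx_i^*$, and $\nabla u_i(\bmx_i^*)^{\top}v=0$, hence $t\,u_i(\bmx_i^*)=0$, i.e. $t=0$. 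By compactness of the unit sphere, $\nabla^2 f_i(\bmx_i^*)\preceq-\mu_i I$ for some $\mu_i>0$.

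\textbf{Step 2 (global Hessian and the quadratic bound).} Since $\LNW$ is buyer-separable, $\nabla^2\LNW(\bmx^*)=\tfrac{1}{B_\total}\,\mathrm{blkdiag}\big(\nabla^2 f_{i'}(\bmx_{i'}^*)\big)_{i'}$, where every block is $\preceq 0$ (Euler plus concavity of each $u_{i'}$, no rank hypothesis needed) and the block of the distinguished buyer $i$ is $\preceq-\mu_i I$. Hence for any $\bmx$,
\[
(\bmx-\bmx^*)^{\top}\nabla^2\LNW(\bmx^*)(\bmx-\bmx^*)\le-\frac{\mu_i}{B_\total}\,\|\bmx_i-\bmx_i^*\|^2=-\frac{\mu_i}{B_\total}\,\varepsilon^2 .
\]
Now take $\bmx$ satisfying market clearance (part of the technical assumptions) and close to $\bmx^*$, and set $g(t)=\LNW(\bmx^*+t(\bmx-\bmx^*))$. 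By convexity of the feasible set the whole segment is feasible, so $g$ is concave near $t=0$, $g(0)=\OPT$, and $g'(0)\le 0$ (equivalently this follows from the KKT conditions of \cref{prop:market-clear:equality} together with $\sum_i x_{ij}=\sum_i x_{ij}^*=Y_j$). Using $\bmx_i^*>\zeros$ so that $\bmx^*$ is interior and $u_{i'}>0$ near $\bmx^*$, fix a ball $U\ni\bmx^*$ on which $\nabla^2\LNW$ is continuous and the $i$-block stays $\preceq-\tfrac{\mu_i}{2}I$; for $\bmx\in U$ the segment lies in $U$, and Taylor's theorem with integral remainder gives
\[
\OPT-\LNW(\bmx)=-g'(0)-\int_0^1(1-s)\,g''(s)\,\dd s\ \ge\ \frac{\mu_i}{4B_\total}\,\varepsilon^2 ,
\]
the desired $\Omega(\varepsilon^2)$ bound (a local statement, valid for small $\varepsilon$).

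\textbf{Main obstacle.} The crux is Step 1: extracting \emph{strict} negative definiteness of $\nabla^2 f_i(\bmx_i^*)$ from "$\rank H(\bmx_i^*)=m-1$" together with homogeneity and concavity. The subtlety is that $H(\bmx_i^*)$ is only negative semidefinite, with the radial direction $\bmx_i^*$ in its kernel, so the rank-one correction $-\tfrac{B_i}{u_i^2}\nabla u_i\nabla u_i^{\top}$ must exactly fill in that one missing direction — which is precisely why $\nabla u_i(\bmx_i^*)^{\top}\bmx_i^*=u_i(\bmx_i^*)>0$ is indispensable (a degenerate rank-drop in another direction, or a gradient orthogonal to $\bmx_i^*$, would break it). A secondary point requiring care is making the bound uniform over all feasible $\bmx$ with $\|\bmx_i-\bmx_i^*\|=\varepsilon$ in a fixed neighborhood of $\bmx^*$, handled by continuity of the Hessian on $U$.
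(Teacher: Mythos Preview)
Your proposal is correct and follows essentially the same route as the paper: a second-order Taylor expansion of $\LNW$ around $\bmx^*$, with the first-order term killed by optimality/market clearance and the second-order term controlled by showing that the Hessian of $B_i\log u_i$ at $\bmx_i^*$ is strictly negative definite. The paper phrases the latter via the matrix $H_0(\bmx_i^*)=\tfrac{B_i^2}{u_i(\bmx_i^*)}H(\bmx_i^*)-\bmp^*\bmp^{*\top}$ (which equals $B_i\nabla^2 f_i(\bmx_i^*)$ once one substitutes $\nabla u_i(\bmx_i^*)=\tfrac{u_i(\bmx_i^*)}{B_i}\bmp^*$), so your Step~1 and the paper's rank argument are the same computation; your kernel-chasing (``$v^\top Hv=0\Rightarrow v\in\ker H=\mathrm{span}\{\bmx_i^*\}$, then $\nabla u_i^\top v=0\Rightarrow v=0$'') is in fact a cleaner justification than the paper's eigenvector contradiction. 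Your use of the integral-remainder form with a fixed neighborhood $U$ is also slightly more careful than the paper's $O(\varepsilon^3)$ bookkeeping, and you correctly observe that only the distinguished block needs the rank hypothesis while the remaining blocks are automatically $\preceq 0$ by concavity.
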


\begin{lemma}
\label{lem:gap-epsilon:2}
Denote $\Tilde{u}_i(\bmp,B_i)$ and $\bmx_i^*(\bmp,B_i)$ as the maximum utility buyer $i$ can get and the corresponding consumption for buyer $i$ when her budget is $B_i$ and prices are $\bmp$. If following hold:
\begin{itemize}
    \item $||\bmp - \bmp^*|| = \varepsilon$
    \item $\bmx_i^*(\bmp, B_i)$ is differentiable with $\bmp$.
    \item $H_X \coloneqq (\sum_\iinn \frac{\partial x_{ij}^*}{\partial p_k}(\bmp^*, B_i))_{j,k\in[m]}$ has full rank.
\end{itemize}
then we have $LFW(\bmp) - OPT = \Omega(\varepsilon^2)$.
\end{lemma}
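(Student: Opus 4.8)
The plan is to exhibit $\bmp^*$ as a nondegenerate minimizer of $\mathrm{LFW}$ on the price-constraint hyperplane, and then read off the bound from a second-order Taylor expansion. By \cref{cor:market-equilibrium-is-saddle-point} (which rests on \cref{thm:gap-largerthan0}), over the prices obeying the price constraint $\langle\bmp,\bmY\rangle=B_\total$ the function $\mathrm{LFW}$ attains its minimum value $\OPT$ at the equilibrium price $\bmp^*$; as in the setting of \cref{prop:gap-epsilon} we take $\bmp$ to obey the same constraint, so that $\bmd:=\bmp-\bmp^*$ lies in the tangent subspace $T:=\{\bmd:\langle\bmd,\bmY\rangle=0\}$ and $\|\bmd\|=\varepsilon$ (and for $\varepsilon$ small, $\bmp$ stays in the positive orthant, where all expressions below are valid since $\bmp^*>0$ by \cref{prop:market-clear:equality}). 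It then suffices to lower bound $\mathrm{LFW}(\bmp)-\mathrm{LFW}(\bmp^*)$.

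First I would compute $\nabla\mathrm{LFW}$. Since each $u_i$ is homogeneous of degree one, $\tilde u_i(\bmp)=B_i/e_i(\bmp)$, where $e_i(\bmp):=\min\{\langle\bmp,\bmx\rangle:u_i(\bmx)\ge1,\ \bmx\ge0\}$ is the unit-expenditure function; the budget-$B_i$ buyer consumes the cost-minimizing unit bundle scaled by $B_i/e_i(\bmp)$, so $x_{ik}^*(\bmp,B_i)=B_i\,\partial_{p_k}\log e_i(\bmp)$ and hence $\partial_{p_k}\log\tilde u_i(\bmp)=-x_{ik}^*(\bmp,B_i)/B_i$, giving $\nabla\mathrm{LFW}(\bmp)=-\frac{1}{B_\total}\sum_i\bmx_i^*(\bmp,B_i)$. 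At $\bmp^*$, market clearance gives $\sum_i\bmx_i^*(\bmp^*,B_i)=\bmY$, so $\nabla\mathrm{LFW}(\bmp^*)=-\bmY/B_\total$, which is orthogonal to $T$; the linear term in the Taylor expansion along $\bmd$ therefore vanishes. Differentiating once more --- legitimate by the assumed differentiability of $\bmp\mapsto\bmx_i^*(\bmp,B_i)$ --- yields $\nabla^2\mathrm{LFW}(\bmp^*)=-\frac{1}{B_\total}H_X$ (a symmetric matrix, so $H_X$ is in fact symmetric at $\bmp^*$).

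Next I would pin down the sign. As a pointwise minimum of the linear maps $\bmp\mapsto\langle\bmp,\bmx\rangle$ over the convex set $\{\bmx\ge0:u_i(\bmx)\ge1\}$, the function $e_i$ is concave; hence $\log e_i$ is concave (composition with the increasing concave $\log$), hence $\log\tilde u_i=\log B_i-\log e_i$ is convex, and therefore $\mathrm{LFW}$ is convex on the positive orthant. Consequently $\nabla^2\mathrm{LFW}(\bmp^*)=-\frac{1}{B_\total}H_X\succeq0$, and combined with $\rank H_X=m$ this upgrades to $\nabla^2\mathrm{LFW}(\bmp^*)\succ0$; write $\lambda_{\min}>0$ for its least eigenvalue. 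A second-order Taylor expansion at $\bmp^*$ along $\bmd\in T$ then gives, the linear term being zero, $\mathrm{LFW}(\bmp)-\OPT=\frac{1}{2}\bmd^{\top}\nabla^2\mathrm{LFW}(\bmp^*)\bmd+o(\varepsilon^2)\ge\frac{\lambda_{\min}}{2}\varepsilon^2+o(\varepsilon^2)=\Omega(\varepsilon^2)$, which is the claim.

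The step I expect to be the main obstacle is the second-order analysis of $\mathrm{LFW}$: rigorously justifying the envelope-theorem identities for $\nabla\mathrm{LFW}$ and especially $\nabla^2\mathrm{LFW}$ under only the stated differentiability (and with enough continuity to control the Taylor remainder), and securing $\nabla^2\mathrm{LFW}(\bmp^*)\succeq0$. Note that the second-order necessary condition at a constrained minimum gives only positive semidefiniteness of the Hessian restricted to $T$, and $\rank H_X=m$ by itself would not exclude a degenerate direction inside $T$; it is the global convexity of $\mathrm{LFW}$, via concavity of the expenditure function, that closes this gap.
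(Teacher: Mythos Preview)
Your proposal is correct and follows essentially the same route as the paper: Taylor-expand $\mathrm{LFW}$ at $\bmp^*$, kill the first-order term using market clearance and the price constraint, identify the quadratic form with (a constant multiple of) $H_X$, and invoke the full-rank hypothesis to upgrade semidefiniteness to strict definiteness. The paper derives $\partial_{p_k}\log\tilde u_i=-x_{ik}^*/B_i$ via Roy's identity together with $\partial_{B_i}\tilde u_i=\tilde u_i/B_i$ rather than your expenditure-function route, and it argues semidefiniteness of the Hessian simply from ``$\bmp^*$ minimizes $\mathrm{LFW}$''; your observation that constrained minimality alone yields only positive semidefiniteness on the tangent subspace $T$ is well taken, and your global-convexity argument (via concavity of $e_i$ as an infimum of linear functions, hence concavity of $\log e_i$, hence convexity of $\log\tilde u_i$) is the cleaner way to close that step.
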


\begin{remark}
\label{rmk:1}
It's worth notice that $H(\bmx_i^*)$ can not has full rank $m$, since $u_i(\bmx)$ is assumed to be homogeneous and thus linear in the direction $\bmx$. Therefore, we have $H(\bmx_i) \bmx_i = \zeros$ for all $\bmx_i$.

Let $C_i = \{\bmx_i\in \bbR_+^m: \langle \bmp, \bmx_i\rangle = B_i\}$ be the consumption set of buyer $i$, since $\bmx_i$ can not be parallel with $C_i$, the condition that $H(\bmx_i^*)$ has rank $m-1$ means that, $H(\bmx_i)$ is strongly concave at point $\bmx_i^*$ on the consumption set $C_i$.

Besides, we emphasize that the conditions in \cref{lem:gap-epsilon:1} and \cref{lem:gap-epsilon:2} are satisfied for CES utility with $\alpha < 1$.
\end{remark}

\begin{corollary}
\label{cor:1}
Under the assumptions in \cref{lem:gap-epsilon:1} and \cref{lem:gap-epsilon:2}, if $\GAP(\bmx,\bmp) = \varepsilon$, we have:
\begin{itemize}
    \item $||\bmp - \bmp^*|| = O(\sqrt{\varepsilon})$
    \item $||\bmx_i - \bmx_i^*|| = O(\sqrt{\varepsilon})$ for all $i$
\end{itemize}
\end{corollary}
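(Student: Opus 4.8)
The plan is to interpose the Eisenberg--Gale optimum $\OPT = \LNW(\bmx^*)$ and split the Nash Gap into two non-negative contributions, one governing the price deviation and one governing the allocation deviations. Write $\GAP(\bmx,\bmp) = \varepsilon_p + \varepsilon_x$, where $\varepsilon_p = \LFW(\bmp) - \OPT$ and $\varepsilon_x = \OPT - \LNW(\bmx)$. First I would check both summands are non-negative (recall $\GAP$ is considered on pairs obeying market clearance and the price constraint, as in \cref{thm:gap-largerthan0}): $\varepsilon_x \ge 0$ because $\bmx$ is market-clearing while $\bmx^*$ maximizes the EG objective $B_\total\cdot\LNW(\cdot)$ over all market-clearing allocations; and $\varepsilon_p \ge 0$ is precisely inequality \eqref{prf:gap:1} from the proof of \cref{thm:gap-largerthan0}, instantiated at the given price $\bmp$, since $\LNW(\bmx^*) = \OPT$. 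Hence $\GAP(\bmx,\bmp) = \varepsilon$ forces $\varepsilon_p \le \varepsilon$ and $\varepsilon_x \le \varepsilon$.

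Next I would invert the quadratic lower bounds supplied by the two lemmas. \cref{lem:gap-epsilon:2} yields a constant $c > 0$ with $\LFW(\bmp) - \OPT \ge c\,\|\bmp - \bmp^*\|^2$ once $\bmp$ is near $\bmp^*$, so $\|\bmp - \bmp^*\|^2 \le \varepsilon_p/c \le \varepsilon/c$, i.e.\ $\|\bmp - \bmp^*\| = O(\sqrt{\varepsilon})$. Similarly, fixing a buyer $i$, \cref{lem:gap-epsilon:1} yields a constant $c_i > 0$ with $\OPT - \LNW(\bmx) \ge c_i\,\|\bmx_i - \bmx_i^*\|^2$ for $\bmx_i$ near $\bmx_i^*$, so $\|\bmx_i - \bmx_i^*\| \le \sqrt{\varepsilon_x/c_i} = O(\sqrt{\varepsilon})$. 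By \cref{rmk:1} the hypotheses of both lemmas hold for CES utilities with $\alpha < 1$, so the corollary covers the settings studied in the paper.

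The substance here is not a deep obstacle but some asymptotic bookkeeping in the inversion step: \cref{lem:gap-epsilon:1} and \cref{lem:gap-epsilon:2} only assert \emph{local} $\Omega(\cdot^2)$ growth, so to turn ``$\GAP \le \varepsilon$'' into ``deviation $= O(\sqrt\varepsilon)$'' one must first argue that, for $\varepsilon$ small enough, $\|\bmp - \bmp^*\|$ and each $\|\bmx_i - \bmx_i^*\|$ already lie inside the neighborhoods on which the quadratic bounds are valid; this follows from \cref{thm:gap-largerthan0}, since $\LFW(\bmp) - \OPT$ and $\OPT - \LNW(\bmx)$ are continuous and vanish exactly at the equilibrium. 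I would also note that $c_i$, hence the implied $O(\cdot)$ constant for the $i$-th allocation, may depend on $i$ through the curvature of $u_i$ at $\bmx_i^*$, consistent with the per-buyer phrasing of the statement.
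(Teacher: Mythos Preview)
Your proposal is correct and follows the same approach as the paper: decompose $\GAP(\bmx,\bmp)=\varepsilon$ into the two non-negative pieces $\LFW(\bmp)-\OPT$ and $\OPT-\LNW(\bmx)$, each therefore at most $\varepsilon$, and then invert the $\Omega(\cdot^2)$ bounds from \cref{lem:gap-epsilon:1} and \cref{lem:gap-epsilon:2}. The paper's own proof is a single sentence to this effect; your version is more explicit, and your attention to the local-vs-global issue in the inversion step is a point the paper does not address.
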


\begin{proof}[Proof of \cref{cor:1}]
A direct inference from \cref{lem:gap-epsilon:1} and \cref{lem:gap-epsilon:2}, notice that $\GAP=\varepsilon$ indicates that $\OPT - \LNW(\bmx)\le \varepsilon$ and $\LFW(\bmp) - \OPT \le \varepsilon$.
\end{proof}

\cref{cor:1} states that, for a pair of $(\bmx,\bmp)$ that satisfy market clearance and price constraints, a small Nash Gap indicates that the point $(\bmx,\bmp)$ is close to the equilibrium point $(\bmx^*,\bmp^*)$, in the sense of Euclidean distance.

\begin{lemma}
\label{lem:gap-epsilon:3}
Assume following hold:
\begin{itemize}
    \item buyers have same utilities at $\bmx^*$, i.e. $u_i(\bmx_i^*) = u_j(\bmx_j^*) \equiv c$ for all $i,~ j$
    \item $||\bmx_i - \bmx_i^*||\le \varepsilon$ for all $i$
\end{itemize}
then, we have $|\WSW(\bmx) - \WSW(\bmx^*)| = O(\varepsilon^2)$.
\end{lemma}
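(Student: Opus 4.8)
\textbf{Proof proposal for \cref{lem:gap-epsilon:3}.}

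The plan is to run a second-order Taylor expansion of each $u_i$ around $\bmx_i^*$ and exploit the fact that, at the equilibrium, the gradient $\nabla u_i(\bmx_i^*)$ is proportional to the price vector $\bmp^*$ (the first-order KKT condition of the buyer's problem, which also follows from \cref{thm:eg-gives-me-for-certain-ui} since $\langle \bmp^*, \bmx_i^*\rangle = B_i$ and $\bmx_i^*$ maximizes $u_i$ on the budget line). Concretely, write $\bmx_i = \bmx_i^* + \bm{\delta}_i$ with $\|\bm{\delta}_i\| \le \varepsilon$. Then
\begin{align}
    u_i(\bmx_i) = u_i(\bmx_i^*) + \langle \nabla u_i(\bmx_i^*), \bm{\delta}_i\rangle + O(\varepsilon^2),
\end{align}
so that
\begin{align}
    \WSW(\bmx) - \WSW(\bmx^*) = \sum_\iinn \frac{B_i}{B_\total}\,\langle \nabla u_i(\bmx_i^*), \bm{\delta}_i\rangle + O(\varepsilon^2).
\end{align}
The goal is to show that the first-order term is actually zero (or at least $O(\varepsilon^2)$), leaving only the quadratic remainder.

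The key step is to control $\sum_\iinn \frac{B_i}{B_\total}\langle \nabla u_i(\bmx_i^*), \bm{\delta}_i\rangle$. Using $\nabla u_i(\bmx_i^*) = \mu_i \bmp^*$ for some scalar $\mu_i > 0$ (the Lagrange multiplier of buyer $i$'s problem), this term becomes $\sum_\iinn \frac{B_i \mu_i}{B_\total}\langle \bmp^*, \bm{\delta}_i\rangle$. Now I would invoke the hypothesis that buyers have equal utility at $\bmx^*$, together with homogeneity: by Euler's identity $\langle \nabla u_i(\bmx_i^*), \bmx_i^*\rangle = u_i(\bmx_i^*) = c$, so $\mu_i \langle \bmp^*, \bmx_i^*\rangle = c$, i.e. $\mu_i B_i = c$, hence $\frac{B_i\mu_i}{B_\total} = \frac{c}{B_\total}$ is the \emph{same constant} for every $i$. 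Therefore the first-order term collapses to $\frac{c}{B_\total}\big\langle \bmp^*, \sum_\iinn \bm{\delta}_i\big\rangle = \frac{c}{B_\total}\big\langle \bmp^*, \sum_\iinn \bmx_i - \sum_\iinn \bmx_i^*\big\rangle$. Both $\bmx$ and $\bmx^*$ satisfy market clearance, so $\sum_\iinn \bmx_i = \bmY = \sum_\iinn \bmx_i^*$, and the inner product vanishes exactly. (If $\bmx$ is only assumed to approximately clear the market, one gets an $O(\varepsilon)$ bound on $\|\sum_\iinn\bm\delta_i\|$ and the argument still yields $O(\varepsilon^2)$ after combining with $\|\bm\delta_i\|\le\varepsilon$; but under the standing market-clearance assumption it is clean.)

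The remaining work is bookkeeping on the quadratic remainder: since each $u_i$ is twice differentiable (as in \cref{lem:gap-epsilon:1}) and $\bmx_i^* > \zeros$, on a neighborhood of $\bmx_i^*$ the Hessian is bounded, so $|u_i(\bmx_i) - u_i(\bmx_i^*) - \langle\nabla u_i(\bmx_i^*),\bm\delta_i\rangle| \le C\|\bm\delta_i\|^2 \le C\varepsilon^2$, and summing with weights $B_i/B_\total \le 1$ gives the claimed $O(\varepsilon^2)$. I expect the main obstacle to be purely a matter of justifying uniformity of the second-order constant $C$ across all $i$ — in the infinite-buyer/contextual setting one should argue that the buyer contexts lie in a compact set (or that $B$ and $v$ are bounded), so that a single $C$ works; for CES utilities with $\alpha<1$ and $\bmx_i^*$ bounded away from the boundary this is straightforward. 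A secondary subtlety is ensuring $\nabla u_i(\bmx_i^*)\parallel\bmp^*$ genuinely holds at the (possibly boundary-free) optimum, which is exactly the content of the interiority assumption $\bmx_i^*>\zeros$ carried over from \cref{lem:gap-epsilon:1}.
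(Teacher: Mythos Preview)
Your proposal is correct and follows essentially the same route as the paper: Taylor expand, use the KKT identity $\nabla u_i(\bmx_i^*)=\tfrac{u_i(\bmx_i^*)}{B_i}\bmp^*$ (your $\mu_i B_i=c$ via Euler), so the weighted first-order term becomes a constant times $\langle\bmp^*,\sum_i\bm\delta_i\rangle$, which vanishes by market clearance. Your additional remarks on uniformity of the Hessian bound and the interiority assumption are more careful than the paper's own write-up, but the argument is the same.
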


\begin{remark}
\label{rmk:2}
These conditions can be held when buyers are homogeneous, \ie, $B_i = B_j$ and $u_i(\bmx)=u_j(\bmx)$ for all $i,j,\bmx\in \bbR^m_+$. Besides, consider buyers with same budgets, these conditions can also be held if the market has some ``equivariance property'', \eg, there is a $n$-cycle permutation of buyers $\rho$ and permutation of goods $\tau$, such that $u_i(\bmx_i) = u_{\rho(i)}(\tau(\bmx_{\rho(i)}))$ for all $i$ and $\tau(Y_1,...,Y_m) = (Y_1,...,Y_m)$.
\end{remark}

\begin{corollary}
\label{cor:2}
Under the assumptions in \cref{lem:gap-epsilon:1} and \cref{lem:gap-epsilon:3}, if $\GAP(\bmx,\bmp) = \varepsilon$, we have
\begin{itemize}
    \item $|\WSW(\bmx) - \WSW(\bmx^*)| = O(\varepsilon)$.
\end{itemize}
\end{corollary}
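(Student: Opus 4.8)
The plan is to compose \cref{lem:gap-epsilon:1} and \cref{lem:gap-epsilon:3}, threading them through the same decomposition of the Nash Gap that drives the proof of \cref{cor:1}.

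First I would split the gap. Under market clearance and the price constraint, \cref{cor:market-equilibrium-is-saddle-point} gives $\LNW(\bmx) \le \OPT \le \LFW(\bmp)$, where $\OPT$ denotes the common optimal value of \eqref{eq:eisenberg-gale-primal}; hence
\begin{align}
\GAP(\bmx,\bmp) = \bigl(\LFW(\bmp) - \OPT\bigr) + \bigl(\OPT - \LNW(\bmx)\bigr)
\end{align}
is a sum of two nonnegative terms, so $\GAP(\bmx,\bmp) = \varepsilon$ forces in particular $\OPT - \LNW(\bmx) \le \varepsilon$.

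Next I would invert \cref{lem:gap-epsilon:1} to control the allocation error. Pick $i^\star \in \argmax_i \|\bmx_i - \bmx_i^*\|$ and set $\delta = \|\bmx_{i^\star} - \bmx_{i^\star}^*\|$. Applying \cref{lem:gap-epsilon:1} at $i^\star$ produces a constant $c>0$ — depending only on the fixed market data at $\bmx^*$ (the Hessians $H(\bmx_i^*)$), taken as a minimum over the finitely many buyers — with $\OPT - \LNW(\bmx) \ge c\,\delta^2$. Combined with the previous step this gives $\delta \le \sqrt{\varepsilon/c}$, i.e.\ $\|\bmx_i - \bmx_i^*\| = O(\sqrt{\varepsilon})$ for every $i$, which is exactly the allocation conclusion of \cref{cor:1}.

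Finally I would invoke \cref{lem:gap-epsilon:3}. Since the buyers share a common utility value at $\bmx^*$ by hypothesis and $\|\bmx_i - \bmx_i^*\| \le \sqrt{\varepsilon/c}$ for all $i$, \cref{lem:gap-epsilon:3} applied with $\sqrt{\varepsilon/c}$ playing the role of its parameter yields $|\WSW(\bmx) - \WSW(\bmx^*)| = O\bigl((\sqrt{\varepsilon/c})^2\bigr) = O(\varepsilon)$, the desired bound. The whole argument is bookkeeping on top of the two lemmas; the one place needing care — and the main obstacle — is the uniformity of the constant $c$ in \cref{lem:gap-epsilon:1} across buyers, since that is what lets a single scalar inequality on $\OPT - \LNW(\bmx)$ bound $\max_i \|\bmx_i - \bmx_i^*\|$ simultaneously for all $i$. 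In the homogeneous/equivariant regimes of \cref{rmk:2} this is immediate because all buyers look alike, and otherwise it follows from finiteness of the buyer set, after which the two $\varepsilon$-independent asymptotic estimates genuinely compose into an $O(\varepsilon)$ bound.
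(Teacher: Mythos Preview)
Your proposal is correct and follows exactly the paper's approach: the paper's proof is literally the one line ``A direct inference from \cref{lem:gap-epsilon:1} and \cref{lem:gap-epsilon:3},'' and you have simply unpacked that inference in the natural way (split $\GAP$ into its two nonnegative pieces, invert \cref{lem:gap-epsilon:1} to get $\|\bmx_i-\bmx_i^*\|=O(\sqrt{\varepsilon})$ uniformly in $i$, then feed this into \cref{lem:gap-epsilon:3}). Your remark about uniformity of the constant $c$ is well taken and is indeed handled in the paper's proof of \cref{lem:gap-epsilon:1}, where $k$ is chosen as a universal lower bound for $|\lambda_n^i|$ over all buyers.
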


\begin{proof}
\label{prf:cor:2}
A direct inference from \cref{lem:gap-epsilon:1} and \cref{lem:gap-epsilon:3}.
\end{proof}

\subsubsection{Proof of \cref{lem:gap-epsilon:1}}
\begin{proof}[Proof of \cref{lem:gap-epsilon:1}]
\label{prf:lem:gap-epsilon:1}
We observe that
\begin{align*}
    \OPT - \LNW(\bmx) = \sum_\iinn B_i \left[\log u_i(\bmx_i^*) - \log u_i(\bmx_i)\right]
\end{align*}

Consider the Taylor expansion of $\log u_i(\bmx_i)$ and $u_i(\bmx_i)$:
\begin{align*}
    \log u_i(\bmx_i) =& \log u_i(\bmx_i^*) + \frac{1}{u_i(\bmx_i^*)}(u_i(\bmx_i) - u_i(\bmx_i^*))
    \\
    -& \frac{1}{2 u_i(\bmx_i^*)^2}(u_i(\bmx_i) - u_i(\bmx_i^*))^2
    \\
    +& O((u_i(\bmx_i) - u_i(\bmx_i^*))^3)
    \\
    u_i(\bmx_i) =& u_i(\bmx_i^*) + \frac{\partial u_i}{\partial \bmx_i}(\bmx_i^*)(\bmx_i - \bmx_i^*)
    \\
    +& \frac{1}{2}(\bmx_i - \bmx_i^*)^T H(\bmx_i^*) (\bmx_i - \bmx_i^*) + O(||\bmx_i - \bmx_i^*||^3)
\end{align*}

Notice that $||\bmx_i - \bmx_i^*||=\varepsilon$, we have
\begin{align}
    \log u_i(\bmx_i) =& \log u_i(\bmx_i^*)\notag
    \\
    +& \frac{1}{u_i(\bmx_i^*)}[ \frac{\partial u_i}{\partial x_i}(\bmx_i^*)(\bmx_i - \bmx_i^*)\cdots \varepsilon\text{ term}\label{eq:2}
    \\
    +& \frac{1}{2} (\bmx_i - \bmx_i^*)^T H(\bmx_i^*) (\bmx_i - \bmx_i^*) ]\cdots \varepsilon^2\text{ term}\label{eq:3}
    \\
    -& \frac{1}{2u_i(\bmx_i^*)^2} \left( \frac{\partial u_i}{\partial \bmx_i}(\bmx_i^*)(\bmx_i - \bmx_i^*) \right)^2\cdots \varepsilon^2\text{ term}\label{eq:4}
    \\
    +& O(\varepsilon^3)\notag
\end{align}

We next deal with \cref{eq:2} to \cref{eq:4} separately.
\paragraph{Derivation of \cref{eq:2}}
Since $\bmx_i^*$ solves the buyer $i$'s problem, we must have
\begin{equation}
    \frac{\partial u_i}{\partial x_i}(\bmx_i^*) = \lambda_i \bmp^*
    \label{eq:5}
\end{equation}
where $\lambda_i$ is the Lagrangian Multipliers for buyer $i$.

We also know that $u_i(\bmx_i)$ is homogeneous with degree 1, by Euler formula, we derive
\begin{equation}
    \langle \frac{\partial u_i}{\partial x_i}(\bmx_i), \bmx_i\rangle = u_i(\bmx_i)
    \label{eq:6}
\end{equation}

Combine \cref{eq:5} and \cref{eq:6} and take $\bmx_i = \bmx_i^*$, we derive
\begin{align*}
    \lambda_i \langle \bmp^*,\bmx_i^*\rangle =& u_i(\bmx_i^*)
    \\
    \lambda_i =& \frac{u_i(\bmx_i^*)}{B_i}
    \\
    \frac{\partial u_i}{\partial x_i}(\bmx_i^*) =& \frac{u_i(\bmx_i^*)}{B_i} \bmp^*
\end{align*}

Sum up over $i$ for \cref{eq:2}, we have
\begin{equation}
\label{eq:7}
\begin{aligned}
    & \sum_\iinn B_i \frac{1}{u_i(\bmx_i^*)} \frac{\partial u_i}{\partial x_i}(\bmx_i^*)(\bmx_i - \bmx_i^*)
    \\
    =& \bmp \sum_\iinn (\bmx_i - \bmx_i^*)
    \\
    =& 0\cdots\text{by market clearance}
\end{aligned}
\end{equation}

\paragraph{Derivation of \cref{eq:3} and \cref{eq:4}}

Combining \cref{eq:3} and \cref{eq:4}, we have
\begin{align*}
    & \frac{B_i}{2u_i(\bmx_i^*)} (\bmx_i - \bmx_i^*)^T H(\bmx_i^*) (\bmx_i - \bmx_i^*)
    - \frac{1}{2 B_i} (\bmx_i - \bmx_i^*)^T (\bmp^* \bmp^{*T}) (\bmx_i - \bmx_i^*)
    \\
    =& \frac{1}{2B_i}(\bmx_i - \bmx_i^*)^T (\frac{B_i^2}{u_i(\bmx_i^*)}H(\bmx_i^*) - \bmp^* \bmp^{*T})  (\bmx_i - \bmx_i^*)
\end{align*}

Denote $H_0(\bmx_i^*) = \frac{B_i^2}{u_i(\bmx_i^*)}H(\bmx_i^*) - \bmp^* \bmp^{*T}$, next we assert that $H_0(\bmx_i^*)$ is negative definite.

Since $H(\bmx_i^*)$ and $-\bmp^* \bmp^{*T}$ are negative semi-definite, $H_0(\bmx_i^*)$ must be negative semi-definite with $\rank(H_0(\bmx_i^*))\ge m-1$.

Let $\lambda_1 \le \lambda_2 \le \dots \le \lambda_{m-1} < \lambda_m = 0$ be eigenvalues and $v_1,...,v_n=\bmx_i^*$ be eigenvectors of $H(\bmx_i^*)$.
If $\rank(H_0(\bmx_i^*))= m-1$, it means that $\bmx_i^*$ has to be eigenvectors of $-\bmp^* \bmp^{*T}$ with eigenvalue $0$. However, we have $- \bmp^* \bmp^{*T} \bmx_i^* = - B_i \bmp^* \ne 0$, which leads to a contradiction.

Therefore, we have $\rank(H_0(\bmx_i^*))=m$ and $H_0(\bmx_i^*)$ is negative definite, we denote $\lambda^i_1\le ...,\le \lambda^i_n < 0$ as the eigenvalues of $H_0(\bmx_i^*)$, and $k$ as the universal lower bound for $|\lambda^i_n|$, then we have that,
\begin{equation}
\label{eq:8}
    \frac{1}{2}(\bmx_i - \bmx_i^*)^T H_0(\bmx_i^*) (\bmx_i - \bmx_i^*)
    \le -\frac{k}{2} \varepsilon^2
\end{equation}

By combining \cref{eq:7} and \cref{eq:8}, we have
\begin{equation}
\label{eq:9}
\begin{aligned}
    \OPT - \LNW(\bmx) =& - \sum_\iinn B_i \left[ \frac{1}{2B_i}(\bmx_i - \bmx_i^*)^T H_0(\bmx_i^*) (\bmx_i - \bmx_i^*)\right] + O(\varepsilon^3)
    \\
    \ge& \frac{k}{2}\varepsilon^2 + O(\varepsilon^3)
    \\
    =& \Omega(\varepsilon^2)
\end{aligned}
\end{equation}

\end{proof}

\subsubsection{Proof of \cref{lem:gap-epsilon:2}}

\begin{proof}[Proof of \cref{lem:gap-epsilon:2}]
\label{prf:lem:gap-epsilon:2}  

The proof is similar with \cref{prf:lem:gap-epsilon:1} by using Taylor expansion technique. Before that, we first derive some identities.

By Roy's identity, we have
\begin{align*}
    \frac{\partial \Tilde{u_i}}{\partial p_j}(\bmp,B_i) = - x_{ij}^*(\bmp, B_i) \frac{\partial \tilde{u}_i}{\partial B_i}(\bmp, B_i)
\end{align*}

Since $u(\bmx_i)$ is homogeneous with $\bmx_i$, it's easy to derive that
\begin{align*}
    \frac{\partial \tilde{u}_i}{\partial B_i}(\bmp,B_i) = \frac{\Tilde{u}_i(\bmp,B_i)}{B_i}
\end{align*}

Above all, 
\begin{align*}
    \frac{\partial \Tilde{u_i}}{\partial p_j}(\bmp,B_i) = - \frac{1}{B_i} x_{ij}^*(\bmp, B_i) \Tilde{u}_i(\bmp,B_i)
\end{align*}

Besides, 
\begin{align*}
    \frac{\partial^2 \Tilde{u_i}}{\partial p_j \partial p_k}(\bmp,B_i) =& \frac{1}{B_i^2}x_{ij}^*(\bmp, B_i)x_{ik}^*(\bmp, B_i)\Tilde{u}_i(\bmp,B_i)
    \\
    -& \frac{1}{B_i}\frac{x_{ij}^*(\bmp, B_i)}{\partial p_k}\Tilde{u}_i(\bmp,B_i)
\end{align*}

Next we consider the Taylor expansion,
\begin{align}
    \log \tilde{u}_i(\bmp) =& \log \tilde{u}_i(\bmp^*)\notag
    \\
    +& \frac{1}{\tilde{u}_i(\bmp^*)}[\frac{\partial \tilde{u}_i}{\partial \bmp}(\bmp^*) (\bmp - \bmp^*)\cdots\varepsilon \text{ term}\label{eq:10}
    \\
    +& \frac{1}{2}(\bmp - \bmp^*)^T H_p (\bmp - \bmp^*)]\cdots\varepsilon^2 \text{ term}\label{eq:11}
    \\
    -& \frac{1}{2\tilde{u}_i(\bmp^*)^2}\left[ \frac{\partial \tilde{u}_i}{\partial \bmp}(\bmp^*)(\bmp-\bmp^*) \right]^2\cdots\varepsilon^2 \text{ term}\label{eq:12}
    \\
    +& O(\varepsilon^3)\notag
\end{align}

where $H_p$ is the Hessian matrix for $\tilde{u}_i(\bmp)$.

\paragraph{Derivation of \cref{eq:10}}
We have
\begin{align*}
    & \sum_\iinn B_i \frac{1}{\tilde{u}_i(\bmp^*)} \langle \frac{\partial \tilde{u}_i}{\partial \bmp}(\bmp^*), (\bmp - \bmp^*)\rangle
    \\
    =& \sum_\iinn \langle\bmx_i^*, (\bmp - \bmp^*)\rangle
    \\
    =& \langle \ones, (\bmp - \bmp^*)\rangle\cdots\text{by market clearance}
    \\
    =& 0\cdots\text{by price constraints}
\end{align*}

\paragraph{Derivation of \cref{eq:11} and \cref{eq:12}}
These expressions become
\begin{align*}
    & \frac{1}{2\tilde{u}_i(\bmp^*)} [\frac{1}{B_i^2}\tilde{u}_i(\bmp^*)\langle \bmx_i^*, \bmp-\bmp^*\rangle^2 - \frac{1}{B_i}\tilde{u}_i(\bmp^*) (\bmp-\bmp^*)^T (\frac{\partial x_{ij}^*}{\partial p_k}(\bmp^*, B_i))_{j,k\in[m]} (\bmp-\bmp^*)]
    \\
    -& \frac{1}{2\tilde{u}_i(\bmp^*)^2}\frac{\tilde{u}_i(\bmp^*)^2}{B_i^2} \langle \bmx_i^*, \bmp-\bmp^*\rangle^2
    \\
    =& \frac{1}{2B_i} (\bmp-\bmp^*)^T (\frac{\partial x_{ij}^*}{\partial p_k}(\bmp^*, B_i))_{j,k\in[m]} (\bmp-\bmp^*)
\end{align*}

Summing up over $i$, we derive that
\begin{align*}
    \LFW(\bmp) - \OPT =& \sum_\iinn B_i \frac{1}{2B_i} (\bmp-\bmp^*)^T (\frac{\partial x_{ij}^*}{\partial p_k}(\bmp^*, B_i))_{j,k\in[m]} (\bmp-\bmp^*) + O(\varepsilon^3)
    \\
    =& \frac{1}{2}(\bmp-\bmp^*)^T H_X (\bmp-\bmp^*) + O(\varepsilon^3)
\end{align*}

Since $\bmp^*$ gets the minimum of $\LFW(\bmp)$, we must have that $H_X$ is positive semi-definite. Together with $H_X$ has full rank, we know that $H_X$ is positive definite. Denote $\lambda_m$ as the minimum eigenvalues of $H_X$, we have
\begin{align*}
    \LFW(\bmp) - \OPT \ge& \frac{\varepsilon^2 \lambda_m}{2} + O(\varepsilon^3)
    \\
    =& \Omega(\varepsilon^2)
\end{align*}

\end{proof}

\subsubsection{Proof of \cref{lem:gap-epsilon:3}}

\begin{proof}[Proof of \cref{lem:gap-epsilon:3}]
\label{prf:lem:gap-epsilon:3}
Notice that
\begin{align*}
    \WSW(\bmx) = \WSW(\bmx^*) + \sum_\iinn \langle \frac{\partial \WSW}{\partial \bmx_i}(\bmx_i^*), (\bmx_i - \bmx_i^*)\rangle + O(\varepsilon^2)
\end{align*}

We have
\begin{align*}
    & \frac{\partial \WSW}{\partial \bmx_i}(\bmx_i^*)
    \\
    =& B_i \frac{\partial u_i}{\partial \bmx_i}(\bmx_i^*)
    \\
    =& B_i \frac{u_i(\bmx_i^*)}{B_i} \bmp^*
    \\
    =& c \bmp^*
\end{align*}

Therefore,
\begin{align*}
    \WSW(\bmx) =& \WSW(\bmx^*) + \sum_\iinn c \langle \bmp^*, \bmx_i - \bmx_i^*\rangle + O(\varepsilon^2)
    \\
    =& \WSW(\bmx^*) + O(\varepsilon^2)\cdots\text{market clearance}
\end{align*}
which completes the proof.

\end{proof}

\section{Additional Experiments Details}
\label{app:experiments}

\subsection{More about baselines}
\label{app:exp:baselines}

\paragraph{EG program solver (abbreviated as EG)}
We introduce the baseline algorithm EG in this part. 
Recall the Eisenberg-Gale convex program\eqref{eq:eisenberg-gale-primal}:

\begin{equation}
\max\quad \frac{1}{n}\sum_{i=1}^n B_i \log u_i(\bmx_i)\quad {\rm s.t.} \ \frac{1}{n} \sum_{i=1}^n x_{ij} = 1, \ x\geq 0.
\end{equation}

We use the network module in pytorch to represent the parameters $\bmx \in \bbR^{n\times m}_+$, and softplus activation function to satisfy $x\ge 0$ automatedly. We use gradient ascent algorithm to optimize the parameters $\bmx$.
For constraint $\frac{1}{n}\sum_\iinn x_{ij} = 1$, we introduce Lagrangian multipliers $\lambda_j$ and minimize the Lagrangian:
\begin{align}
    \calL_\rho(\bmx;\bmlam) =& - \frac{1}{n}\sum_\iinn B_i \log u_i(\bmx_i) + \sum_\jinm \lambda_j \left( \frac{1}{n} \sum_\iinn x_{ij}  - 1 \right)
    \\
    +& \frac{\rho}{2}\sum_\jinm \left(\frac{1}{n}\sum_\iinn x_{ij} - 1 \right)^2
\end{align}

The updates of $\bmlam$ is $\lambda_j \leftarrow \lambda_j + \beta_t \rho \left( \frac{1}{n}\sum_\iinn x_{ij} - 1 \right)$, here $\beta_t$ is step size, which is identical with that in MarketFCNet.
The algorithm returns the final $(\bmx, \bmlam)$ as the approximated market equilibrium.


\paragraph{EG program solver with momentum (abbreviated as EG-m)}
The program to solve is exactly same with that in EG. The only difference is that we use gradient ascent with momentum to optimize the parameters $\bmx$.

\subsection{More Experimental Details}
\label{app:exp:details}

Without special specification, we use the experiment settings as follows. All experiments are conducted in one RTX 4090 graphics cards using 16 CPUs or 1 GPU. We set dimension of representations of buyers and goods to be $k=5$. Each elements in representation is i.i.d from $\calN(0,1)$ for normal distribution (default) contexts, $U[0,1]$ for uniform distribution contexts and $Exp(1)$ for exponential distribution contexts. Budget is generated with $B(b) = ||b||_2$, and valuation in utility function is generated with $v(b,g) = \mathrm{softplus}(\langle b,g\rangle)$, where $\mathrm{softplus}(x) = \log (1 + \exp(x))$ is a smoothing function that maps each real number to be positive. $\alpha$ in CES utility are chosen to be 0.5 by default. 
MarketFCNet is designed as a fully connected network with depth 5 and width 256 per layer. $\rho$ is chosen to be 0.2 in Augmented Lagrange Multiplier Method and the step size $\beta_t$ is chosen to be $\frac{1}{\sqrt{t}}$. We choose $K=100$ as inner iteration for each epoch, and training for $30$ epochs in MarketFCNet. 
For \emph{EG} and \emph{EG-m} baselines, we choose the inner iteration $K=1000$ when $n>1000$ and $K=100$ when $n\le 1000$ for each epoch. Baselines are enssembled with early stopping as long as $\GAP$ is lower than $10^{-3}$. Both baselines are optimized for $30$ epochs in total. 

We use Adam optimizer and learning rate $1e-4$ to optimize the allocation network in MarketFCNet. When computing $\Delta \lambda_j$ in MarketFCNet, we directly compute $\Delta \lambda_j$ rather than generate an unbiased estimator, since it does not cost too much to consider all buyers for one time.
For those baselines, we use gradient descent to optimize the parameters following existing works, and the step size is fine-tuned to be $1e+2$ for $\alpha=1$, $n>1000$; $1e+3$ for $\alpha < 1$, $n>1000$ and $1$ for $\alpha < 1,\ n\le 1000$ and $0.1$ for $\alpha = 1,\ n \le 1000$ for better performances of the baselines. Since that Lagrangian multipliers $\lambda\le 0$ will indicate an illegal Nash Gap measure, therefore, we hard code EG algorithm such that it will only return a result when it satisfies that the price $\lambda_j > 0$ for all good j. All baselines are run in GPU when $n>1000$ and CPU when $n\le 1000$.\footnote{We find in the experiments when market size is pretty large, baselines run slower on CPU than on GPU and this phenomenon reverses when market size is small. Therefore, the hardware on which baselines run depend on the market size and we always choose the faster one in experiments.}

\end{document}